\tikzset{join/.code=\tikzset{after node path={%
\ifx\tikzchainprevious\pgfutil@empty\else(\tikzchainprevious)%
edge[every join]#1(\tikzchaincurrent)\fi}}}
\tikzset{>=stealth',every on chain/.append style={join},
         every join/.style={->}}
\tikzstyle{labeled}=[execute at begin node=$\scriptstyle,
\numberwithin{equation}{section}
\newtheorem{definition}{Definition}[section]
\newtheorem{proposition}[definition]{Proposition}
\newtheorem{corollary}[definition]{Corollary}
\newtheorem{remarkth}[definition]{Remark}
\newenvironment{remark}{\begin{remarkth}\upshape}{\hfill$\diamond$\end{remarkth}}
\renewcommand{\emph}[1]{{\bfseries\itshape{#1}}}
\newcommand{\lcf}{\lbrack\! \lbrack}
\newcommand{\rcf}{\rbrack\! \rbrack}
\newcommand{\lvec}[1]{\overleftarrow{#1}}
\newcommand{\rvec}[1]{\overrightarrow{#1}}
\newcommand\prol{\@ifstar{\@proldf}{\@prolpf}}  
\def\@prolpf{\@ifnextchar[{\@prolpf@wrt}{\@prolpf@}}
\def\@prolpf@wrt[#1]#2{\@ifnextchar[{\@prolpf@wrt@at{#1}{#2}}{\@prolpf@wrt@{#1}{#2}}}
\def\@prolpf@wrt@at#1#2[#3]{\prolsymbol^{#1}_{#3}#2}
\def\@prolpf@wrt@#1#2{\prolsymbol^{#1}#2}
\def\@prolpf@#1{\@ifnextchar[{\@prolpf@at{#1}}{\@prolpf@@{#1}}}
\def\@prolpf@at#1[#2]{\prolsymbol_{#2}#1}
\def\@prolpf@@#1{\prolsymbol#1}
\def\@proldf{\@ifnextchar[{\@proldf@wrt}{\@proldf@}}
\def\@proldf@wrt[#1]#2{\@ifnextchar[{\@proldf@wrt@at{#1}{#2}}{\@proldf@wrt@{#1}{#2}}}
\def\@proldf@wrt@at#1#2[#3]{\prolsymbol^{*#1}_{#3}#2}
\def\@proldf@wrt@#1#2{\prolsymbol^{*#1}#2}
\def\@proldf@#1{\@ifnextchar[{\@proldf@at{#1}}{\@proldf@@{#1}}}
\def\@proldf@at#1[#2]{\prolsymbol^*_{#2}#1}
\def\@proldf@@#1{\prolsymbol^*#1}
\def\prolsymbol{\mathcal{T}}
\def\lcf{\lbrack\! \lbrack}
\def\rcf{\rbrack\! \rbrack}
\def\lcf{\lbrack\! \lbrack}
\def\rcf{\rbrack\! \rbrack}
\begin{document}

\title[ Some remarks on invariant Poisson quasi-Nijenhuis structures on Lie groups ]{Some remarks on invariant Poisson quasi-Nijenhuis structures on Lie groups}

\author[Gh. Haghighatdoost]{Ghorbanali Haghighatdoost}\address{Gh.\ Haghighatdoost:	Azarbaijan Shahid Madani University \\Department of Mathematics, Faculty of Science \\	Tabriz, Iran}\email{gorbanali@azaruniv.ac.ir}

\author[Z. Ravanpak]{Zohreh Ravanpak}\address{Z.\ Ravanpak:	Azarbaijan Shahid Madani University \\	Department of Mathematics, Faculty of Science \\Tabriz, Iran}\email{z.ravanpak@azaruniv.ac.ir }

\author[A. Rezaei-Aghdam]{Adel Rezaei-Aghdam}\address{A.\ Rezaei-Aghdam:	Azarbaijan Shahid Madani University \\	Department of Physics, Faculty of Science \\	Tabriz, Iran}\email{rezaei-a@azaruniv.ac.ir}

\keywords{Poisson quasi-Nijenhuis structures, Lie bialgebras and coboundary Lie bialgebras, Generalized complex structures}

\subjclass[2010]{37K05, 37K10, 53D17, 37K30 }
\thanks{This research was supported by research fund No. $217.d.14312$ from Azarbaijan Shahid Madani University.
}

\begin{abstract}
 We study {\em right-invariant (resp., left-invariant)  Poisson quasi-Nijenhuis structures} on a Lie group $G$ and introduce their infinitesimal counterpart, the so-called {\em r-qn structures} on the corresponding Lie algebra $\mathfrak g$. 
We investigate the procedure of the classification of such structures on the Lie algebras and then for clarity of our results we classify, up to a natural equivalence, all $r$-$qn$ structures on two types of four-dimensional real Lie algebras. We mention some remarks on the relation between $r$-$qn$ structures and the generalized complex structures on the Lie algebras $\mathfrak g$ and also the solutions of modified Yang-Baxter equation on the double of Lie bialgebra $\mathfrak g\oplus\mathfrak g^*$. The results are applied to some relevant examples.
\end{abstract}

\maketitle

\date{\today}

\section{Introduction}
Poisson quasi-Nijenhuis ($P$-$qN$) structures on manifolds were introduced by Sti\'enon and Xu \cite{StXu} as triples $(\Pi,\Phi,\mathbf N)$ on a manifold $M$ for which $\Pi$ is a Poisson $2$-vector, $\mathbf N$  is a $(1,1)$-tensor field and $\Phi$ is a closed $3$-form, such that $\Pi$ and  $\mathbf N$ are compatible in the sense of Poisson-Nijenhuis structures \cite{KoMa} and the Nijenhuis torsion of $\mathbf N$ is  
\[
[\mathbf N, \mathbf N](X,Y)=\Pi^{\sharp}(\iota_{X\wedge Y}\Phi), \quad \forall X,Y\in \frak{X}(M).
\]

In this work we study Poisson quasi-Nijenhuis structures on a Lie group $G$ which are appropriate right-invariant (or resp. left-invariant), so-called {\em right-invariant $P$-$qN$ structures} (or resp. {\em left-invariant $P$-$qN$ structures}) on $G$; we introduce their infinitesimal counterpart, the objects which we called {\em $r$-$qn$ structures} on the Lie algebra $\mathfrak g$ of $G$.

In fact Poisson-Nijenhuis structures are trivial Poisson quasi-Nijenhuis structures since for them $\Phi \equiv 0$. The infinitesimal counterpart of right-invariant $P$-$N$ structures, the structures which we called them $r$-$n$ structures, can be used to construct compatible solutions of classical Yang-Baxter equations (for more details see \cite{Zohreh}).

Since in the $r$-$qn$ structures, $(1,1)$-tensor field $\mathbf N$ is not Nijenhuis torsion free, in general the $2$-vector $nr$ is not an $r$-matrix. We show that in a certain condition we can have compatible $r$-matrices by $r$-$qn$ structures. 

In the following, we show that how we can obtain all $r$-$qn$ structures on the Lie algebra $\mathfrak g$ with finite dimension, or equivalently all right-invariant $P$-$qN$ structures on the connected simply-connected Lie group $G$ corresponding to $\mathfrak g$. Many of this structures would be equivalent by an Lie algebra automorphism, so in order to classify such structures we need to define an equivalence relation. We study the procedure of classification of $r$-$qn$ structures and classify, up to a natural equivalence, all $r$-$qn$ structures on two chosen real Lie algebras in dimension four, symplectic Lie algebra $A_{4,1}$ and non-symplectic Lie algebra $A_{4,8}$.
 
In \cite{StXu}, the authors studied generalized complex structures in terms of Poisson quasi-Nijenhuis manifold; they showed that a generalized complex manifold corresponds to a spacial class of Poisson quasi-Nijenhuis structures. So, it would be of interest to have $r$-$qn$ structures since whose spacial class corresponds to the generalized complex structures on the Lie algebra.

In fact in the infinitesimal level, where we deal with a Lie bialgebra, the generalized tangent bundle $\mathfrak g\oplus \mathfrak g^*$ of $\mathfrak g$ which is called the double of Lie bialgebra is equipped with a Lie algebra structure, so in this case a generalized complex structure on $\mathfrak g$ can be viewed as a spacial class of solutions of modified Yang-Baxter equation on the Lie algebra $\mathfrak g\oplus \mathfrak g^*$ .

We shall give some relevant examples of $r$-$qn$ structures on some Lie algebras $\mathfrak g$ which can be considered as a generalized complex structure on $\mathfrak g$ or as a solution of modified Yang-Baxter equation on $\mathfrak g\oplus \mathfrak g^*$.

The outline of the paper is as follows: In Section \ref{Section2} we briefly recall the notion of $P$-$qN$ structures on a manifold. We also take a review on cohomology of Lie algebra and then definition of Lie bialgebras, classical and modified Yang-Baxter equation. We end this section by the review of generalized complex strictures on a manifold. In Section \ref{Section3} we define right-invariant $P$-$qN$ structures on the Lie group $G$ as the main object of study and introduce their infinitesimal counterpart; the compatibility and equivalency of such structures will be also considered in this section.
In Section \ref{method} we describe the systematic way to get all $r$-$qn$ structures on $\mathfrak g$, equivalently all right-invariant $P$-$qN$ structures on $G$. The classification procedure of right-invariant $P$-$qN$ structures is the subject of Section \ref{cllasification}. we list the results of a classification of $r$-$qn$ structures  on two four dimensional Lie algebras, symplectic real Lie algebra $A_{4,1}$ and non-symplectic Lie algebra $A_{4,8}$; we explain all details in the procedure for Lie algebra $A_{4,1}$. We shall consider some remarks on $r$-$qn$ structures in Section \ref{app}; more precisely, we shall consider the conditions for which an $r$-$qn$ structure on Lie algebra $\mathfrak g$ defines a generalized complex structure on $\mathfrak g$ or an $R$-matrix on double of Lie algebra  $\mathfrak g \oplus\mathfrak g^*$. We end the paper by some relevant examples of $r$-$qn$ structures obtained in section \ref{cllasification}.

 \section{Antecedents }\label{Section2}
 In this section, we recall the definition of Poisson quasi-Nijenhuis
 structures \cite{StXu}. We will briefly review the notion of Lie bialgebra, classical and modified Yang-Baxter equation. We also take a review on generalized complex strictures on a manifold.

 \subsection{Poisson quasi-Nijenhuis structure}
 A Poisson quasi-Nijenhuis ($P$-$qN$)  structure on a manifold $M$ is a bivector field $\Pi: T^{*}M \times T^{*}M \to \mathbb R $, a $(1,1)$-tensor field ${\mathbf N}: TM \to TM$  together with a closed $3$-form $\Phi: TM \times TM \times TM \to \mathbb R $ on $M$ satisfying the conditions:
  \begin{enumerate}
  \item $\Pi$ is a Poisson bivector, i.e. $[\Pi, \Pi]\footnote{ $[\cdot, \cdot ]$ is the Schouten-Nijenhuis bracket.} = 0$,
  
  \item  
$
 [{\mathbf N}X,{\mathbf N}Y] - {\mathbf N}[{\mathbf N}X, Y] - {\mathbf N}[X,{\mathbf N}Y] + {\mathbf N}^2[X, Y]=\Pi ^{\sharp}(\Phi^{\sharp} (X,Y)), \quad \forall X, Y \in {\mathfrak X}(M),$

 \item
  ${\mathbf N} \Pi^{\sharp} = \Pi^{\sharp} {\mathbf N}^t,$ 
  
 \item
$ C(\Pi,{\mathbf N})(\alpha,\beta)=\left\lbrace \alpha,\beta\right\rbrace_{{\mathbf N}\Pi}- \left\lbrace \alpha,\beta\right\rbrace^{{\mathbf N}^{t}}_\Pi=0,\quad \forall \alpha, \beta \in {\Omega^1}(M),$

\item $d(i_{\mathbf N}\Phi)=0,$
\end{enumerate}
 where  $\Pi^{\sharp}: T^*M \to TM$ and $\Phi^{\sharp}:TM \times TM\to  T^*M$ are induced by $\Pi$ and $\Phi$, given by interior product,
 \[
\Pi^{\sharp} ( \alpha)=\iota_{\alpha}\Pi,\quad \Phi^{\sharp} ( X,Y)=\iota_{X\wedge Y}\Phi, \quad \alpha\in \Omega^1(M)\; \mbox{and} \; X,Y \in \mathfrak X(M).
 \]  
 ${\mathbf N}^{t}:T^{*}M \to T^{*}M$ is the dual $(1,1)$-tensor field to ${\mathbf N}$ and $C(\Pi, {\mathbf N})$ is a $(2,1)$-tensor field on $M$, a concomitant of $\Pi$ and ${\mathbf N}$, where  
\[
	\begin{array}{rcl}
\left\lbrace \alpha,\beta\right\rbrace^{{\mathbf N}^{t}}_\Pi
		&=&\left\lbrace {\mathbf N}^{t}\alpha,\beta\right\rbrace_{\Pi}+\left\lbrace \alpha,{\mathbf N}^{t}\beta \right\rbrace_{\Pi} -{\mathbf N}^{t}\left\lbrace \alpha,\beta \right\rbrace_{\Pi},\\[4pt]
		\end{array}
		\]
and the bracket $\{\cdot,\cdot\}_{\Pi}$ is the bracket of 1-forms which is defined by the Poisson bivector $\Pi$ as follows
\begin{equation}\label{Poisson-bracket}
\left\lbrace \alpha,\beta\right\rbrace _{\Pi}  =\mathcal L_{\Pi^{\sharp}\alpha}\beta-\mathcal L_{\Pi^{\sharp}\beta}\alpha+d(\Pi(\alpha,\beta)).
\end{equation}	
Similarly, the bracket $\left\lbrace \alpha,\beta\right\rbrace_{{\mathbf N}\Pi}$ is the bracket of 1-forms defined by the 2-contravariant tensor ${\mathbf N} \Pi$ (for more details see, \cite{KoMa}). 

\noindent Note that $i_{\mathbf N}$ is the derivation of degree $0$ defined by
\begin{equation}\label{in}
(i_{\mathbf N}\alpha)(X_1,...,X_p)=\sum_{i=1}^p\alpha(X_1,...,{\mathbf N}X_i,...,X_p),\quad \forall \alpha \in \Omega^p(M).
\end{equation}

{\bf Example}. Poisson-Nijenhuis structures on $M$ are trivial Poisson quasi-Nijenhuis, since for them the $3$-form $\Phi\equiv 0$.
   \subsection{Cohomology of Lie algebra } 
Let $\mathfrak g$ be a finite dimensional Lie algebra and $G$ be its corresponding simply connected Lie group. To each $\mathfrak g$-module $M$ and arbitrary nonnegative integer $k$ we can associate a {\em$k$-cochain} of $\mathfrak g$ with values in $M$ as a $k$-linear skew-symmetric map from $\mathfrak g$ to $M$. The $0$-cochain is just an element of $M$. Let us denote the space of $k$-cochains of $\mathfrak g$ with valuse in $M$ by $C^k(\mathfrak g,M)$.

 A linear map $\partial:C^k(\mathfrak g,M) \to C^{k+1}(\mathfrak g,M)$, satisfying $\partial^2=0$, is called a {\em coboundary operator}. We consider the definition of Chevalley-Eilenberg coboundary operator 
 \[
 \partial\delta(X_0,...X_k)=\sum _{i=0}^{k} (-1)^{i}X_i (\delta (X_0,...\hat{X_i}
 ,...,X_k))+\sum_{i<j}^{i+j}(-1)^{i+j}\delta([X_i,X_j],...,\hat {X_i},...,\hat{X_j},...,X_k),
 \]
 for $k$-cochain $\delta$ and $X_0,...,X_k \in \mathfrak g$. A $k$-cochain is called a {\em $k$-cocycle} if its coboundary is zero. 
 
Now we set $M:=\mathbb R$ with the trivial action of $\mathfrak g$ and use the abbreviate notation $C^k(\mathfrak g)$ instead of $C^k(\mathfrak g, \mathbb R)$. Note that, in this case the cohomology group is just the cohomology group of right-invariant (resp. left-invariant) forms on $G$ and the cobondary operator $\partial$ is exactly the de Rham differential $d$. The coboundary maps for $0,1,2$ and $3$-cochains $\theta, \eta,\mu$ and $\phi$ are:
\[
\begin{array}{rcl}
(\partial \theta)(X_i)& =& 0,\\
(\partial \eta)(X_i,X_j)& =& -\eta([X_i,X_j]),\\
(\partial \mu)(X_i,X_j,X_k)& =& -\mu([X_i,X_j],X_k)+\mu([X_i,X_k],X_j)-\mu([X_j,X_k],X_i),\\
\end{array}
\]
\begin{equation}\label{con3}
\begin{array}{rcl}
(\partial \phi)(X_i,X_j,X_k,X_l)& =& -\phi([X_i,X_j],X_k,X_l)+\phi([X_i,X_k],X_j,X_l)-\phi([X_i,X_l],X_j,X_k)\\
&&-\phi([X_j,X_k],X_i,X_l)+\phi([X_j,X_l],X_i,X_k)-\phi([X_k,X_l],X_i,X_j),\\
\end{array}
\end{equation}
 We remark that the cocycle condition is equivalent to the corresponding $k$-cochain being closed. So, a $k$-form on $\mathfrak g$ is closed if it is a $k$-cocycle in $C^k(\mathfrak g)$.

Now we can proceed to the definition of the Lie bialgebra.

   A {\em Lie bialgebra} $(\mathfrak g, \mathfrak g^*)$ is a Lie algebra with an additional structure, a linear map $\delta:\mathfrak g\to \mathfrak g\otimes \mathfrak g$ such that:
\begin{enumerate}
	\item[$(i)$] The linear map $\delta : \mathfrak{g}\longrightarrow \mathfrak{g}\otimes\mathfrak{g}$ is a $1$-cocycle, i.e.
	$$ad_X^{(2)}(\delta Y)-ad_Y^{(2)}(\delta X)- \delta[X,Y]=0,\quad \forall X,Y\in \mathfrak{g}, $$
	
where $ad_X^{(2)}$ is the adjoint representation of the Lie algebra $\mathfrak g$ on the space $\mathfrak g \otimes \mathfrak g$ defined by 

\[
ad_X^{(2)}(Y_1\otimes Y_2)=ad_{X}Y_1\otimes Y_2+Y_1\otimes ad_{X} Y_2\quad X,Y_1,Y_2 \in \mathfrak g.
\]

	\item[$(ii)$] The dual map $\delta^{t}: \mathfrak{g}^*\otimes{\mathfrak g}^* \longrightarrow \mathfrak g^{*}$ is a Lie bracket on $\mathfrak g^{*}$.
\end{enumerate}	
\noindent	We denote the Lie bracket on $\mathfrak g^*$ by $[\alpha,\beta]_*:=\delta^t(\alpha\otimes\beta)$ for $\alpha, \beta \in \mathfrak g^*$.

{\em Coboundary Lie bialgebra} is a Lie bialgebra defined by a $1$-cocycle $\partial r$ which is the coboundary of an element $r\in \mathfrak g\otimes \mathfrak g$ (for more details, see \cite{Ko}).

 We consider the following notation for the Sklaynin brackets on $\mathfrak g^*$ and $\mathfrak g$ defined by $r$-matrices $r$ and $r^{-1}$:
 \[
 [ X^i,  X^j]^r:=[ X^i,  X^j]_*={ \tilde f}^{ij}_{k} X^k, \quad \left[ X_i,X_j\right] ^{r^{-1}}:=[X_i,X_j]={f_{ij}}^{k} X_k,
 \]

\noindent where ${\tilde  f}^{ij}$ and $f_{ij}$ are structure constants of $\mathfrak g^*$ and $\mathfrak g$, respectively.
 
\subsubsection{Classical Yang-Baxter equation}
Let $\mathfrak g$ be finite-dimensional Lie algebra and $\mathfrak g^*$ be its dual vector space with respect to a non-degenerate canonical pairing $\left\langle\cdot,\cdot\right\rangle$, so for basis $\{X_i\}$ and dual basis $\{ X^i\}$ of $\mathfrak g$ and $\mathfrak g^*$, respectively, we have:
\[
\langle  X_i,X_j \rangle= \langle  X^i, X^j \rangle=0,
 \quad  \langle  X^i,X_j \rangle=\delta^i_j.
 \]
 To every element $r=r^{ij}X_i\otimes X_j$ of $\mathfrak g \otimes \mathfrak g$, we can associate the linear map $r^{\sharp}:{\mathfrak g}^* \to \mathfrak g$ defined by $r^{\sharp}({X}^i)({X}^j):=r({X}^i,{X}^j)$. Let $\delta_r:=\partial r$, then
 \[
 \delta_ r(X)=ad^{(2)}_{X}r=r^{ij}(ad_X X_i\otimes X_j+X_i\otimes ad_X X_j), \quad X\in \mathfrak g.
 \]
By definition, $\delta r$ is a $1$-cocycle. We denote the bracket on $\mathfrak g^*$ in this case by $[{X}^i,{X}^j]^r$ instead of $[{X}^i,{X}^j]_*$.

 \noindent To every element $r$ of $\mathfrak g \otimes \mathfrak g$ we can also associate a bilinear map $\left\langle r,r\right\rangle ^{\sharp}:{\mathfrak g}^* \times {\mathfrak g}^*\to \mathfrak g$ defined by
 \begin{equation}\label{con-r-matrix}
 \left\langle r,r\right\rangle ^{\sharp}({X}^i,{X}^j)=[r^{\sharp} {X}^i,r^{\sharp} {X}^j]-r^{\sharp}[{X}^i,{X}^j]^r,
 \end{equation}
which can be identified with an element $\langle r,r\rangle \in \wedge^2 \mathfrak g \otimes \mathfrak g$, such that
 \[
\langle r,r\rangle({X}^i,{X}^j,{X}^k):=\langle {X}^k,\left\langle r,r\right\rangle^{\sharp}({X}^i,{X}^j)\rangle.
 \]

\noindent For a skew-symmetric element $r \in \Lambda^2 {\mathfrak g} $, we have
 \begin{equation}\label{r-bracket}
[{X}^i,{X}^j]^r=ad^{*}_{r^{\sharp}{X}^i}{X}^j-ad^{*}_{r^{\sharp}{X}^j}{X}^i,
 \end{equation}
 where $ad^*_{X_i}=-(ad_{X_i})^t$ is the endomorphism of $\mathfrak g^*$  satisfying
 \begin{equation}
 \label{ad1}
 \langle  X^i, ad_{X_k}X_j\rangle =-\langle ad^{*}_{X_k} X^i,X_j \rangle,
 \end{equation}
 which implies
 \begin{equation}
  \label{ad2}
 \langle ad^{*}_{X_k} X^i, X_j\rangle =-\langle ad^{*}_{ X_j} X^i,X_k \rangle.
  \end{equation}

On the other hand, in the case where $r$ is skew-symmetric, we have $\left\langle  r,r \right\rangle =-\frac{1}{2}\lcf r,r\rcf$ where $\lcf \cdot,\cdot\rcf$ is Schouten-Nijenhuis bracket on the Lie algebra $\mathfrak g$ called the {\em algebraic Schouten bracket}.

For the skew-symmetric element $r\in \Lambda^2 {\mathfrak g}$, the condition $\lcf r,r\rcf=0$ is called the {\em classical Yang-Baxter equation} (CYBE). A solution of the CYBE is called an {\em $r$-matrix}. For any $r$-matrix the bracket (\ref{r-bracket}) is a Lie bracket on $\mathfrak g^*$, called the {\em Sklyanin bracket}. Therefore $r$-matrices can be identified with coboundary Lie bialgebras on the Lie algebra $\mathfrak g$ (for more details see, for instance, \cite{Ko}).
\begin{remark}
 $r\in \Lambda^2 {\mathfrak g}$ is a solution of classical Yang-Baxter equation if and only if 
 \begin{equation}\label{def-r-matrix}
r^{\sharp}[{X}^i,{X}^j]^r= [r^{\sharp} {X}^i,r^{\sharp} {X}^j], \quad \forall X^i,X^j\in \mathfrak g^*.
\end{equation}
 	\end{remark}
 	\subsubsection{Modified Yang-Baxter equation}\label{modified}
 	
Let $R$ be a linear map from finite-dimensional Lie algebra $\mathfrak g$ to itself. Consider the skew-symmetric bilinear form $\left\langle R,R\right\rangle _k$ on $\mathfrak g$ with values in $\mathfrak g$ defined by
\[
\left\langle R,R\right\rangle _k(X,Y)=[RX,RY]-R[RX,Y]-R[X,RY]+k[X,Y], \quad \forall X,Y \in \mathfrak g,\quad  k\in \mathbb R.
\] 

Condition $\left\langle R,R\right\rangle _k=0$ is called a {\em modified Yang-Baxter equation} (MYBE) with coefficient $k$. A solution of MYBE is called a {\em classical $R$-matrix} or $R$-matrix. 

\noindent We can define a bilinear skew-symmetric bracket $[\cdot,\cdot]_R$ on $\mathfrak g$ as
\[
[X,Y]_R=[RX,Y]+[X,RY],\quad \forall X,Y\in \mathfrak g.
\]
For $R$-matrix $R$, the above bracket defines a Lie algebra structure on $\mathfrak g$ which is called {\em double Lie algebra} (for more details see for example \cite{Ko}).

\subsection{Generalized complex structure}
For a manifold $M$, the space ${\mathcal T} M:=TM\oplus T^*M$ is called the {\em generalized tangent bundle} of $M$. The space of sections of ${\mathcal T} M$ is endowed with a bracket so called the {\em Courant bracket} given by
\[
[(X+\alpha),(Y+\beta)]=[X,Y]+{\mathcal L}_X\beta-{\mathcal L}_Y\alpha -\frac{1}{2}d\left(\alpha(Y)-\beta(x)\right)
\]

$\forall X,Y\in \mathfrak X(M), \forall \alpha, \beta \in \Omega ^1(M)$. This bracket is not a Lie bracket since it does not satisfy the Jacobi identity. The non-degenerate symmetric bilinear form on the vector space ${\mathcal T} M:=TM\oplus T^*M$ is defined by 
\[
\left\langle X+\alpha,Y+\beta\right\rangle =\alpha(X)+\beta(Y),\quad \forall X,Y\in \mathfrak X(M), \;\forall \alpha, \beta \in \Omega ^1(M),
\]
for more detail see \cite{Gu} and \cite {Hi}.
\begin{definition}\label{dif.J}
 A {\em generalized complex structure} on $M$ is a complex structures $\mathit J$ that is, a bundle map $\mathit J:{\mathcal T}M\to {\mathcal T}M$ which ${\mathit J}^2=-Id$ and $\left\langle \mathit Jv,\mathit Jw\right\rangle =\left\langle v,w\right\rangle $ satisfying the integrability condition
 \[
 [\mathit  Jv,\mathit  Jw]-\mathit  J [\mathit  Jv,w]-\mathit  J [v,\mathit  Jw]- [v,w]=0,\quad \forall v,w \in {\mathcal T}M.
 \] 
 \end{definition}
\begin{proposition}\label{g.c}
 A generalized complex structure $\mathit J$ on $M$ is of the form 
\begin{equation}\label{J}
\mathit{J}=\left(\begin{array}{cc}
	N& \Pi^{\sharp}\\ \Theta_{\sharp} & -N^t\end{array} \right),
\end{equation}

where $\Pi$ is a Poisson bivector on $M$ which is compatible with the vector bundle map $N:TM\to TM$ in the sense of Poisson-Nijenhuis structures, and $\Theta$ is a $2$-form on $M$ for which we denote the map $\Theta_{\sharp}:TM\to T^*M$ such that $\Theta_{\sharp}(X)=\iota _{X}\Theta$; satisfying the following conditions
\[
\begin{array}{rcl}
N^t\Theta_{\sharp}&=&\Theta_{\sharp}N\\[3pt]

N^2+\Pi^{\sharp}\Theta_{\sharp}&=&-Id\\[3pt]

[N,N](X,Y)&=&\Pi_{\sharp}\left( \iota_{X\wedge Y}(d\Theta)\right) \\[3pt]
d\Theta_{N}(X,Y,Z)&=& i_N(d\Theta)(X,Y,Z).\\
\end{array}
\]
For more detail see for example \cite{Cr} and \cite{StXu}.
\end{proposition}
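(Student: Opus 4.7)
The plan is to peel off the three defining properties of $\mathit{J}$ one at a time: the pairing-preservation together with $\mathit{J}^2=-Id$ is purely algebraic and will force the block form together with the non-differential conditions 1 and 2; Courant-integrability will then supply the differential content, namely the Poisson property of $\Pi$, its compatibility with $N$, and the final two identities.

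First I would write $\mathit{J}=\begin{pmatrix}A&B\\C&D\end{pmatrix}$ with $A\colon TM\to TM$, $B\colon T^*M\to TM$, $C\colon TM\to T^*M$, $D\colon T^*M\to T^*M$. Combining $\mathit{J}^2=-Id$ with $\langle\mathit{J}v,\mathit{J}w\rangle=\langle v,w\rangle$ gives the skew-adjointness $\langle\mathit{J}v,w\rangle=-\langle v,\mathit{J}w\rangle$. Testing on pure vector and pure covector arguments forces $B$ to be the sharp of a bivector $\Pi$, $C$ the flat of a $2$-form $\Theta$, and $D=-A^t$; setting $N:=A$ yields \eqref{J}. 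Squaring this block matrix and imposing $\mathit{J}^2=-Id$ reads off the algebraic conditions: the diagonal block gives $N^2+\Pi^{\sharp}\Theta_{\sharp}=-Id$ (condition 2), the upper-right block gives the Poisson--Nijenhuis compatibility $N\Pi^{\sharp}=\Pi^{\sharp} N^t$, and the lower-left block gives $N^t\Theta_{\sharp}=\Theta_{\sharp} N$ (condition 1); the lower-right block is the transpose of the upper-left and gives nothing new.

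For the differential half I would substitute $v=X+\alpha$, $w=Y+\beta$ into $[\mathit{J}v,\mathit{J}w]-\mathit{J}[\mathit{J}v,w]-\mathit{J}[v,\mathit{J}w]-[v,w]=0$ and expand using the explicit Courant formula. Setting $\alpha=\beta=0$ and projecting the resulting identity onto $TM$ produces $[N,N](X,Y)=\Pi^{\sharp}(\iota_{X\wedge Y}d\Theta)$, which is condition 3; projecting the same identity onto $T^*M$ produces the relation equivalent to $d\Theta_N=i_N d\Theta$, which is condition 4. Setting $X=Y=0$ and projecting onto $TM$ recovers $[\Pi,\Pi]=0$, so $\Pi$ is Poisson, while the mixed case (one vector field, one $1$-form) combined with the pure $1$-form case supplies the remaining concomitant vanishing $C(\Pi,N)=0$ needed to promote the pair $(\Pi,N)$ to a Poisson--Nijenhuis pair.

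The main obstacle is the bookkeeping in the third step: each Courant bracket expands into a Lie-derivative term, an interior-product term and a $d$-term, so the integrability identity contains many summands of mixed tensorial type, and separating them cleanly into $TM$- and $T^*M$-components and collecting like terms is what actually produces conditions 3 and 4 in the exact form stated. Once that expansion is organized, the three defining properties of $\mathit{J}$ correspond, respectively, to the block form, to the algebraic relations 1, 2 together with $N\Pi^{\sharp}=\Pi^{\sharp} N^t$, and to a Poisson quasi-Nijenhuis compatibility with the closed $3$-form $\Phi=d\Theta$, so that a generalized complex structure is recovered precisely as the special class of $P$-$qN$ structures isolated by Sti\'enon--Xu.
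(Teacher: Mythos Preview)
Your sketch is correct and follows the standard route: the skew-adjointness of $\mathit J$ with respect to the pairing forces the block form with $B=\Pi^\sharp$, $C=\Theta_\sharp$, $D=-N^t$; squaring gives the algebraic identities; and expanding the Courant-integrability condition on pure vector/covector inputs and projecting onto the two summands sorts out the differential conditions. One minor caution in your bookkeeping: in the mixed and pure-covector cases the Courant bracket, as written in the paper, involves the ordinary Lie derivative and $d$ on forms, so the dual Lie-algebroid bracket on $T^*M$ coming from $\Pi$ has to be reconstructed by hand; this is where the concomitant $C(\Pi,N)$ and the Poisson condition $[\Pi,\Pi]=0$ emerge, and it is the place most prone to sign and term-collection errors.

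As for comparison with the paper: the paper does not actually prove this proposition. It is stated as a quotation from the literature, with the line ``For more detail see for example \cite{Cr} and \cite{StXu}'' in lieu of a proof, and the text moves directly to the corollary that follows. Your argument is essentially the one carried out in those references (Crainic, and Sti\'enon--Xu), so there is no discrepancy in approach---you have supplied what the paper deliberately omits.
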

Comparing the previous Proposition and the definition of Poisson-quasi Nijenhuis structures on a manifold we have the following Corollary.  
\begin{corollary}\label{cr.1}
The Poisson quasi-Nijenhuis structure $(\Pi, \Phi, N)$ on $M$ defines a generalized complex structure of the form (\ref{J}) on $M$ if $3$-form $\Phi$ is exact and the two following conditions hold

\begin{equation}\label{con.g.c}
\begin{array}{rcl}
N^t\Theta_{\sharp}&=&\Theta_{\sharp}N\\[3pt]

N^2+\Pi^{\sharp}\Theta_{\sharp}&=&-Id,\\[3pt]
\end{array}\end{equation}
where $\Theta$ is a $2$-form on $M$ such that $\Phi=d\Theta$.

\end{corollary}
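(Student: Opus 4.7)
The plan is to verify that the operator $\mathit{J}$ defined by the block matrix (\ref{J}), using the given primitive $\Theta$ of $\Phi$, satisfies all the conditions listed in Proposition \ref{g.c}. Since that proposition characterises exactly when such a block operator is a generalized complex structure, this is enough.

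Condition by condition: the preliminary requirement that $\Pi$ is Poisson and compatible with $N$ in the Poisson-Nijenhuis sense is the conjunction of axioms (1), (3), (4) of the $P$-$qN$ structure, so it holds automatically. The first two displayed equations of Proposition \ref{g.c}, namely $N^{t}\Theta_{\sharp}=\Theta_{\sharp}N$ and $N^{2}+\Pi^{\sharp}\Theta_{\sharp}=-Id$, are precisely the hypothesis (\ref{con.g.c}). The third, $[N,N](X,Y)=\Pi^{\sharp}(\iota_{X\wedge Y}(d\Theta))$, is axiom (2) of $P$-$qN$ rewritten under the exactness $\Phi=d\Theta$.

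The only non-immediate piece is the fourth equation $d\Theta_{N}(X,Y,Z)=i_{N}(d\Theta)(X,Y,Z)$. My approach here is to apply the Cartan-type commutator $d\circ i_{N}-i_{N}\circ d=d_{N}$ (the degree-one derivation associated with the $(1,1)$-tensor $N$) to $\Theta$, use the symmetry $N^{t}\Theta_{\sharp}=\Theta_{\sharp}N$ to relate $i_{N}\Theta$ to the $2$-form $\Theta_{N}$, and then invoke $P$-$qN$ axiom (5) which under $\Phi=d\Theta$ becomes $d(i_{N}d\Theta)=0$. Combining these ingredients yields the required equality of $3$-forms.

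The hard part will be precisely this last identification, since $P$-$qN$ axiom (5) is a closedness statement for the $3$-form $i_{N}\Phi$ whereas Proposition \ref{g.c} demands a pointwise identity between two explicit $3$-forms; only the careful use of the derivation identities together with the exactness of $\Phi$ bridges the gap. Every other clause of Proposition \ref{g.c} is a direct transcription of a $P$-$qN$ axiom under the assumption $\Phi=d\Theta$, so once this fourth condition is settled the corollary follows.
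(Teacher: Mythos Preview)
Your overall approach---check each clause of Proposition~\ref{g.c} against the $P$-$qN$ axioms and the added hypotheses---matches what the paper intends; the paper itself gives no argument beyond the remark preceding the corollary that it follows ``by comparing'' the proposition with the $P$-$qN$ definition.

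There is, however, a real gap in your handling of the fourth clause $d\Theta_{N}=i_{N}(d\Theta)$. The three ingredients you list (the commutator $d\,i_{N}-i_{N}\,d$, the symmetry $N^{t}\Theta_{\sharp}=\Theta_{\sharp}N$ giving $i_{N}\Theta=2\Theta_{N}$, and axiom~(5)) yield only that $2d\Theta_{N}-i_{N}\Phi$ equals the closed $3$-form $\pm d_{N}\Theta$; they do not force the pointwise identity $d\Theta_{N}=i_{N}\Phi$. Axiom~(5) is a closedness statement, and the commutator relation merely rewrites one side in terms of $d_{N}\Theta$ without determining it. To see that these three ingredients alone are insufficient, take $N=\mathrm{Id}$: the symmetry is trivial and axiom~(5) is vacuous (since $i_{N}\Phi=3\Phi$ is already closed), yet $d\Theta_{N}=d\Theta$ while $i_{N}(d\Theta)=3\,d\Theta$. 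What excludes this case in the corollary is the hypothesis $N^{2}+\Pi^{\sharp}\Theta_{\sharp}=-\mathrm{Id}$ together with $P$-$qN$ axiom~(2), neither of which you invoke for this step. A complete argument for the fourth clause must bring those constraints in. The paper does not supply this argument either, so the omission is not unique to you, but your sketch as written does not close the circle.
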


\section{Right-invariant Poisson quasi-Nijenhuis structures} \label{Section3}
In this section we define right-invariant Poisson quasi-Nijenhuis ($P$-$qN$) structures on the Lie group $G$ and their infinitesimal counterpart on the Lie algebra $\mathfrak g$ of $G$. We also consider the concepts of compatibility and equivalence of those structures.

We are using the following notation. If $s$ is a $k$-vector on $\mathfrak g$ then $\rvec{s}$ (resp. $\lvec{s}$) is the right-invariant (resp. left-invariant) $k$-vector field on $G$ given by
\[
\rvec{s}(g) = (T_{\mathfrak e}R_g)(s), \; \; \mbox{ for } g \in G,
\]
(resp. $\lvec{s}(g) = (T_{\mathfrak e}L_g)(s)$, for $g \in G$) where $R_h: G \to G$ and $L_g: G \to G$ are the right and left translation by $h$ and $g$, respectively.

\begin{definition}
	A $P$-$qN$ structure $(\Pi,\Phi,{\mathbf N})$ on a Lie group $G$ is said to be right-invariant if:
	\begin{enumerate}
		\item
		The Poisson structure $\Pi$ is right-invariant, that is, there exists $r \in \Lambda^2 {\mathfrak g}$ such that $\Pi = \rvec{r}$.
		\item
		The closed $3$-form $\Phi$ is right-invariant, that is, there exist a real valued three linear, skew map $\phi\in C^3(\mathfrak g)$ satisfying $3$-cocycle condition, such that $\Phi = \rvec{\phi}$. 
		\item The $(1,1)$-tensor field ${\mathbf N}$ is right-invariant, that is, there exists a linear endomorphism $n: {\mathfrak g} \to {\mathfrak g}$ such that
		${\mathbf N} = \rvec{n}$.
	\end{enumerate}
\end{definition}
By the definition, we have 
\[
\begin{array}{rcl}
\Pi^{\sharp}_{(g)}&=& T _{\mathfrak e} R_{g} \circ r^{\sharp} \circ T^{*} _{\mathfrak e} R_{g},\\[4pt]
{\mathbf N}_{|T_gG} &=& T_{\mathfrak e}R_g \circ n \circ T_gR_{g^{-1}},\\[4pt]
\Phi^{{\sharp}^{\sharp}}_{(g)}&=& \wedge ^2(T^*_{g}R_{g^{-1}})\circ \phi^{{\sharp}^{\sharp}}\circ (T_gR_{g^{-1}}).\\
\end{array}
\]
 for $ g \in G$. For right-invariant $P$-$qN$ structures, we may prove the two following results which describe the infinitesimal version of such structures.

\begin{proposition}\label{inft-ver-right-1}
	Let  $(\Pi,\Phi,\mathbf N)$ be a right-invariant $P$-$qN$ structure on a Lie group $G$ with Lie algebra $\mathfrak g$ and identity element ${\mathfrak e}\in G$. If $r \in \Lambda^2{\mathfrak g}$ and $\phi \in\Lambda^3{\mathfrak g}^*$ are the value of $\Pi$ and $\Phi$ at ${\mathfrak e}$, and $n$ is the restriction of $N$ to ${\mathfrak g}$,  we have
	\begin{enumerate}
		\item[$(i)$] $r$ is a solution of the classical Yang-Baxter equation on ${\mathfrak g}$.
			\item[$(ii)$]
		The Nijenhuis torsion $	[n, n]$ of $n$ on ${\mathfrak g}$ equals
		\begin{equation}\label{con2}
		[n, n](X, Y) : = [nX, nY] - n[nX, Y] - n[X, nY] + n^2[X,Y] = r^{\sharp}(\phi^{\sharp}(X,Y)), \; \; \forall X, Y \in {\mathfrak g}.
		\end{equation}
		
			\item [$(iii)$] $\phi$ and $i_{n}\phi$ are $3$-cocycles with values in $\mathbb R$.
			\item[$(v)$] $n:\mathfrak g\longrightarrow \mathfrak g$ satisfies the condition
			\[
			n \circ r^{\sharp} = r^{\sharp} \circ n^t.
			\]
		\item[$(iv)$] The concomitant $C(r, n)$ of $r$ and $n$ in ${\mathfrak g}$ is zero, that is,
	\[
		C(r, n)(\alpha, \beta) : =
		\mathcal L^{\mathfrak g}_{r^{\sharp}\alpha}n^{t}\beta - \mathcal L^{\mathfrak g}_{r^{\sharp}\beta}n^{t}\alpha - n^{t}\mathcal L^{\mathfrak g}_{r^{\sharp}\alpha}\beta  + n^{t}\mathcal L^{\mathfrak g}_{r^{\sharp}\beta}\alpha=0\quad \mbox{ for } \alpha, \beta \in \mathfrak g^{*}.
	\]
		Here, ${\mathcal L}^{\mathfrak g}_{X}:=ad^*_{X}:\mathfrak g^* \to \mathfrak g^*$.
	\end{enumerate}
	
Conversely, let ${\mathfrak g}$ be a real Lie algebra of finite dimension, $r \in \Lambda^2{\mathfrak g}$ be a $2$-vector and $\phi \in \Lambda^3{\mathfrak g}^*$ be a $3$-form on ${\mathfrak g}$ and $n: {\mathfrak g} \to {\mathfrak g}$ be a linear endomorphism on ${\mathfrak g}$ which satisfy conditions $(i)$, $(ii)$, $(iii)$, $(iv)$ and (v); so-called $r$-$qn$ structure on the Lie algebra $\mathfrak g$. If $G$ is a Lie group with the Lie algebra ${\mathfrak g}$, then the triple $(\rvec{r}, \rvec {\phi},\rvec{n})$ is a right-invariant $P$-$qN$ structure on $G$.
\end{proposition}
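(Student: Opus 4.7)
The whole proposition is an exercise in the standard dictionary between right–invariant tensor fields on $G$ and the corresponding algebraic objects on $\mathfrak g$. For each of the five defining conditions of a $P$–$qN$ structure, the plan is to evaluate that condition on right–invariant arguments (and, where necessary, at the identity $\mathfrak e$), and to identify the result with one of the algebraic conditions (i)–(v). Since both directions of the equivalence use exactly the same identifications, the forward direction and the converse can be handled by the same calculations read in opposite senses, once one observes that every condition is a tensorial equality between right–invariant objects: a right–invariant tensor identity that holds at $\mathfrak e$ holds everywhere by right–translation. The key tools I will need are (a) the correspondence $\lcf \rvec r,\rvec r\rcf=\rvec{\lcf r,r\rcf}$ for the Schouten–Nijenhuis bracket, (b) the fact that the de Rham differential of right–invariant forms corresponds (up to sign) to the Chevalley–Eilenberg coboundary $\partial$ recalled in Section \ref{Section2}, and (c) the identification $\mathcal L_{\rvec X}\rvec\alpha\big|_{\mathfrak e}=ad^*_X\alpha=\mathcal L^{\mathfrak g}_X\alpha$ for $X\in\mathfrak g$, $\alpha\in\mathfrak g^*$.

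\textbf{Handling the five conditions.} Condition (1) ($\Pi$ is Poisson) reads $\lcf\rvec r,\rvec r\rcf=0$, which by (a) is equivalent to $\lcf r,r\rcf=0$, i.e. to CYBE, giving (i). The closedness $d\Phi=0$ of $\Phi=\rvec\phi$ translates, via (b), into $\partial\phi=0$; the compatibility condition (5), $d(i_{\mathbf N}\Phi)=0$, combined with $i_{\mathbf N}\rvec\phi=\rvec{i_n\phi}$ (which follows directly from the definition \eqref{in}), gives $\partial(i_n\phi)=0$; together these yield (iii). For condition (2), evaluated on $X=\rvec X,Y=\rvec Y$ with $X,Y\in\mathfrak g$, both sides are right–invariant vector fields; evaluating at $\mathfrak e$ the left–hand side is by definition $[n,n](X,Y)$ and the right–hand side is $r^\sharp(\phi^\sharp(X,Y))$, producing (ii). Condition (3) $\mathbf N\Pi^\sharp=\Pi^\sharp\mathbf N^t$, evaluated at $\mathfrak e$, is literally $nr^\sharp=r^\sharp n^t$, so it gives (v). Finally, for condition (4), on right–invariant 1–forms $\alpha=\rvec\alpha$, $\beta=\rvec\beta$, the bracket \eqref{Poisson-bracket} reduces at $\mathfrak e$ (using (c) together with the fact that $\Pi(\alpha,\beta)$ is constant on right–invariant arguments, hence $d$ of it corresponds to a coboundary) to the purely algebraic bracket built from $r^\sharp$ and $ad^*$, so $C(\Pi,\mathbf N)\big|_{\mathfrak e}=C(r,n)$, giving (iv).

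\textbf{Converse and main obstacle.} The converse direction does not require any new calculation: each of the five $P$–$qN$ conditions on $G$ is a tensorial identity between right–invariant multi–linear objects, and a right–invariant tensor field vanishes everywhere as soon as it vanishes at $\mathfrak e$. Hence the computations above can simply be read backwards. The main technical difficulty I anticipate is conceptual rather than combinatorial: condition (4) involves a Lie derivative $\mathcal L_{\Pi^\sharp\alpha}\beta$ in which \emph{both} the vector field $\Pi^\sharp\alpha$ and the form $\beta$ depend on $g\in G$, so the reduction to the algebraic concomitant $C(r,n)$ requires careful bookkeeping of how $T_gR_{g^{-1}}$ intertwines with $\mathcal L$ and $d$. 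Secondarily, one must be consistent with the sign convention $[\rvec X,\rvec Y]=-\rvec{[X,Y]}$ for right–invariant vector fields, which enters conditions (ii) and (iv) and which, for instance, is the reason CYBE appears with its standard sign in (i) despite working with right–invariant (rather than left–invariant) Poisson structures.
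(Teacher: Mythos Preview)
Your proposal is correct and follows essentially the same approach as the paper: evaluate each of the five $P$--$qN$ conditions on right--invariant data, reduce to the identity, and invoke the correspondence between de Rham $d$ on invariant forms and the Chevalley--Eilenberg coboundary $\partial$ for condition (iii). The paper's own proof is in fact much terser than yours---it declares (i), (ii), (iv), (v) ``straightforward by the definition of right invariant objects'' with a reference to \cite{Zohreh}, and handles (iii) in one line---so your write-up is, if anything, more explicit than what the paper provides.
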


\begin{proof}
The proofs of $(i)$, $(ii)$, $(v)$ and $(iv)$ are straightforward by the definition of right invariant objects (see \cite{Zohreh}). For $(iii)$, it is the consequence of the fact that a $k$-form on $\mathfrak g$ is closed if it is a $k$-cocycle in $C^k(\mathfrak g)$. 

\end{proof}	

We remark that, in case of $r$-$qn$ structures the 2-vector $nr$ is not an $r$-matrix since the operator $n$ is not a Nijenhuis operator. In the following Proposition we will see that under a certain condition it would be an $r$-matrix.

If there is not risk of confusion, we will use the same notation $r$ for the 2-vector $r$ and the linear map $r^{\sharp}$.


\begin{proposition}\label{con-2}
	Let $r\in \Lambda^2 \mathfrak g$ be an $r$-matrix and $n$ be an linear operator on $\mathfrak g$ which is compatible with $r$, that is $nr=rn^t$ and $C(r,n)\equiv 0$. Then, 2-vector $nr$ is an $r$-matrix if and only if
	\begin{equation}\label{con-rn}
	[n,n](rX^i,rX^j)=0,\quad \forall X^i,X^j\in {\mathfrak g}^*.
	\end{equation}	
\end{proposition}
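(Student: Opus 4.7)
The plan is to use the characterization of $r$-matrices given in the Remark preceding the proposition: the $2$-vector $nr$ is an $r$-matrix if and only if
\[
(nr)^\sharp [X^i,X^j]^{nr} = [(nr)^\sharp X^i,(nr)^\sharp X^j] \qquad \forall\, X^i,X^j\in\mathfrak g^*,
\]
and the strategy is to compute each side and compare. Throughout I will use $(nr)^\sharp = n\circ r^\sharp = r^\sharp\circ n^t$, which is precisely the compatibility hypothesis $nr = rn^t$.

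First I would expand the right-hand side using the definition of the Nijenhuis torsion,
\[
[nX,nY] = n[nX,Y] + n[X,nY] - n^2[X,Y] + [n,n](X,Y),
\]
applied with $X = r^\sharp X^i$, $Y = r^\sharp X^j$. Using $nr^\sharp = r^\sharp n^t$ on the inner commutators and then the hypothesis that $r$ is an $r$-matrix, the RHS becomes
\[
n\,r^\sharp[n^t X^i, X^j]^r + n\,r^\sharp[X^i, n^t X^j]^r - n^2 r^\sharp [X^i,X^j]^r + [n,n](r^\sharp X^i, r^\sharp X^j).
\]

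Next I would treat the left-hand side. The key preliminary observation, proved from formula (\ref{r-bracket}) together with the concomitant condition $C(r,n)\equiv 0$ (rewritten using $\mathcal L^{\mathfrak g}_X = ad^*_X$), is the identity
\[
[\alpha,\beta]^{nr} = [n^t\alpha,\beta]^r + [\alpha,n^t\beta]^r - n^t[\alpha,\beta]^r,
\qquad \forall\,\alpha,\beta\in\mathfrak g^*.
\]
This is the analogue, at the level of cotangent brackets, of the classical "double Lie algebra" deformation formula; the derivation is a short rearrangement of $ad^*_{r^\sharp n^t\alpha}\beta - ad^*_{r^\sharp n^t\beta}\alpha$ using $C(r,n)=0$. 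Applying $(nr)^\sharp = nr^\sharp$ and once more $r^\sharp n^t = nr^\sharp$ together with the $r$-matrix property of $r$, the LHS becomes
\[
n\,r^\sharp[n^t X^i, X^j]^r + n\,r^\sharp[X^i, n^t X^j]^r - nr^\sharp n^t[X^i,X^j]^r.
\]

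Finally I would subtract: the first two terms on each side cancel, and the compatibility $nr^\sharp n^t = n\cdot nr^\sharp = n^2 r^\sharp$ makes the remaining "diagonal" terms cancel as well, leaving exactly
\[
\text{RHS} - \text{LHS} = [n,n](r^\sharp X^i, r^\sharp X^j).
\]
Hence $nr$ is an $r$-matrix if and only if $[n,n](rX^i,rX^j)=0$ for all $X^i,X^j\in\mathfrak g^*$. The only nontrivial step is establishing the deformed bracket identity for $[\,\cdot\,,\,\cdot\,]^{nr}$ — everything else is direct algebraic manipulation once that formula is in hand, and it is precisely the point where the concomitant hypothesis $C(r,n)\equiv 0$ is used.
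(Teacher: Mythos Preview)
Your proof is correct and follows essentially the same route as the paper: both hinge on the deformed-bracket identity $[\alpha,\beta]^{nr}=[n^t\alpha,\beta]^r+[\alpha,n^t\beta]^r-n^t[\alpha,\beta]^r$ (the consequence of $C(r,n)\equiv 0$) and then compare with the Nijenhuis torsion of $n$ evaluated on $r^\sharp X^i, r^\sharp X^j$. The only organizational difference is that the paper first reformulates the bracket identity as the compatibility $\lcf r,nr\rcf=0$, writes this as $r[X^i,X^j]^{nr}+nr[X^i,X^j]^r=[rX^i,nrX^j]+[nrX^i,rX^j]$, applies $n$, and then matches terms with the torsion expression; you instead apply $nr^\sharp$ directly to the bracket identity and subtract --- but the algebraic content is identical.
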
	

\begin{proof}
	2-vector $nr$ defines a bracke $[X^i,X^j]^{nr}=ad^*_{nrX^i}X^j-ad^*_{nrX^j}X^i$ on $\mathfrak g$. On the other hand  $C(r,n)\equiv 0$ implies
\[
	[X^i,X^j]^{nr}=[n^tX^i,X^j]^{r}+[X^i,n^tX^j]^{r}-n^t[X^i,X^j]^r.
\]
Equivalently, $C(r,n)\equiv 0$ guaranties $2$-vectors $r$ and $nr$ are compatible, that is $\lcf r,nr \rcf \equiv 0$, so we have
	\begin{equation}\label{con-1}
r[  X^i, X^j] ^{nr}+nr[  X^i,  X^j] ^{r}=[ rX^i,nr X^j] +[nr X^i,r X^j].
\end{equation}
Suppose the condition (\ref{con-rn}) holds, that is
	\begin{eqnarray}\label{nr-q1}
	[nrX^i,nrX^j]-n[nrX^i,rX^j]-n[rX^i,nrX^j]+n^2[rX^i,rX^j]=0.
	\end{eqnarray}	

Using (\ref{def-r-matrix}) for $r$-matrix $r$ and comparing two relations (\ref{con-1}) and (\ref{nr-q1}), we get	
	\[
	nr[X^i,X^j]^{nr}=[nrX^i,nrX^j]
	\]
which from \ref{def-r-matrix} means $nr$ is an $r$-matrix. One proves the converse in a similar way.
	
\end{proof}		

\begin{corollary}
In the case that $r$-matrix $r$ in the previous Proposition is non-degenerate, $nr$ is an $r$-matrix if and only if $n$ is a Nijenhuis operator.
	
\end{corollary}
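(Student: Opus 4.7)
My plan is to derive this directly from Proposition \ref{con-2}, since the additional hypothesis (non-degeneracy of $r$) simplifies the characterization obtained there.

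First, I would recall that by Proposition \ref{con-2}, under the standing compatibility hypotheses $nr = rn^t$ and $C(r,n) \equiv 0$, the $2$-vector $nr$ is an $r$-matrix if and only if
\[
[n,n](r^\sharp X^i, r^\sharp X^j) = 0 \quad \text{for all } X^i, X^j \in \mathfrak{g}^*.
\]
The key observation is that when $r$ is non-degenerate, the induced linear map $r^\sharp : \mathfrak{g}^* \to \mathfrak{g}$ is a bijection. Consequently, as $X^i$ and $X^j$ range over $\mathfrak{g}^*$, the pair $(r^\sharp X^i, r^\sharp X^j)$ ranges over all of $\mathfrak{g} \times \mathfrak{g}$.

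Thus the condition \eqref{con-rn} is equivalent to $[n,n](X,Y)=0$ for all $X,Y\in\mathfrak{g}$, which is by definition the statement that $n$ is a Nijenhuis operator. The forward implication (if $nr$ is an $r$-matrix, then $n$ is Nijenhuis) and its converse thus collapse into the single bijection argument above. No real obstacle arises; the proof is essentially a one-line substitution once the previous proposition is in hand. The only point worth flagging is to make explicit that non-degeneracy of the bivector $r\in\Lambda^2\mathfrak{g}$ means exactly that the associated sharp map $r^\sharp$ is invertible, so that one is entitled to reparametrize $X \mapsto r^\sharp X^i$ without loss.
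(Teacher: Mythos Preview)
Your proof is correct and is exactly the argument the paper has in mind: the corollary is stated without proof, as an immediate consequence of Proposition~\ref{con-2}, and the only step needed is the observation you make---that non-degeneracy of $r$ means $r^\sharp$ is bijective, so condition~(\ref{con-rn}) collapses to the vanishing of $[n,n]$ on all of $\mathfrak{g}\times\mathfrak{g}$.
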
	

	
 \subsection{Compatibility of right-invariant Poisson quasi-Nijenhuis structures} \label{compatible}
 Two $P$-$qN$ structures $(\Pi,\Phi, \mathbf N)$ and $(\Pi',\Phi', \mathbf N')$ on a Lie group $G$ are said to be {\em compatible} if the couple $(\Pi+\Pi',\Phi+\Phi', \mathbf N+\mathbf N')$ is a $P$-$qN$ structure on $G$.

In the case of the right-invariant $P$-$qN$ structures, the compatibility of structures reduces to the compatibility of their infinitesimal version.

 If $(r,\phi,n)$ and $(r',\phi',n')$ be the infinitesimal versions of the mentioned $P$-$qN$ structures, then they are {\em compatible} if the couple $(r+r',\phi+\phi',n'+n)$ is an $r$-$qn$ structure on the Lie algebra $\mathfrak g$ of $G$, that is $n+n'$ and $r+r'$ are compatible and
 \[
 \lcf r,r' \rcf=0, \quad [n,n']\footnote{ [n,n'] is the Nijenhuis concomitant (\cite{Ni},\cite{Zohreh}) of two $(1,1)$-tensor fields  $n'$ and $n $ on the Lie algebra $\mathfrak g$ defined by
 	\[
 \begin{array}{rcl}
 [n,n'](X_i,X_j)& =& [n,n](X_i,X_j)+[n',n'](X_i,X_j)+[ n'X_i,nX_j] +[nX_i,n'X_j] - n'[nX_i, X_j] \\[4pt]
 && - n'[X_i,nX_j]- n[n'X_i, X_j] - n[X_i,n'X_j] + n\circ n'[X_i,X_j] + n'\circ n[X_i,X_j],
 \end{array}
\]
for every two elements of basis $\{X_i\}$ of $\mathfrak g$.}	
 	 =r\phi^{\sharp}+r\phi'^{\sharp}+r'\phi^{\sharp}+r'\phi'^{\sharp}.
 \]

 From the Proposition \ref{inft-ver-right-1}, $(r+r',\phi+\phi',n+n')$ would be the infinitesimal version of the right-invariant $P$-$qN$ structure $(\rvec {r+r'},\rvec {\phi+\phi'},\rvec{n+n'})$ on the Lie group $G$.

  \subsection{Equivalence classes of right-invariant Poisson quasi-Nijenhuis structures}\label{equivalence.}
In \cite{Zohreh}, we defined the equivalence class of right-invariant Poisson-Nijenhuis structures. Now, we define the equivalence classes of right-invariant $P$-$qN$ structures. Two right-invariant $P$-$qN$ structures $(P,\Phi,\mathbf N)$ and $(P',\Phi',\mathbf N)$ on the Lie group $G$ are {\em equivalent} if two corresponding $r$-$qn$ structures $(r,\phi,n)$ and $(r',\phi',n')$ on the Lie algebra $\mathfrak g$ are equivalent.

\begin{definition}\label{equivalnt}
	Two $r$-$qn$ structures $(r,\phi,n)$ and $(r',\phi',n')$ are equivalent if there exist a Lie algebra automorphism $\mathcal A$ such that the following diagrams commute,
	\[
	\begin{tikzpicture}
	\matrix (m) [matrix of math nodes, row sep=3em, column sep=4em]
	{  & \mathfrak g  & \mathfrak g^*  & \mathfrak g &\mathfrak g  \\
		& \mathfrak g  & \mathfrak g^* & \mathfrak g & \mathfrak g \\ };
	{ [start chain] 
		\chainin (m-1-2);
		{ [start branch=A] \chainin (m-2-2)
			[join={node[left,labeled] {\mathcal A}}];}
		\chainin (m-1-3) [join={node[above,labeled] {\phi_i}}];
		{ [start branch=B] \chainin (m-2-3)
			[join={node[right,labeled] {\mathcal A^{-t}}}];}
		\chainin (m-1-4) [join={node[above,labeled] {r}}];
		{ [start branch=C] \chainin (m-2-4)
			[join={node[right,labeled] {\mathcal A}}];}
		\chainin (m-1-5)[join={node[above,labeled] {n}}];
		{ [start branch=C] \chainin (m-2-5)
			[join={node[right,labeled] {\mathcal A}}];}
	}
	{ [start chain] 
		\chainin (m-2-2);
		\chainin (m-2-3) [join={node[below,labeled] {\phi'_i}}];
		\chainin (m-2-4) [join={node[below,labeled] {r'}}];
		\chainin (m-2-5) [join={node[below,labeled] {n'}}]; }
		\end{tikzpicture}
		\]
that is, 
\[
r\sim_\mathcal{A} r', \quad n\sim_\mathcal{A} n',\quad \phi_i\sim_\mathcal{A} \phi'_i,\quad \mbox{for}\; i:=1,...,dim \mathfrak g,
\] 
where we define $\phi_i=\phi^{\sharp^{\sharp}}(X_i)\in \Lambda^2{\mathfrak g}^*$. In fact the map $\phi^{\sharp}:\mathfrak g\times \mathfrak g \to {\mathfrak g}^*$ induces a map  $\phi^{\sharp^{\sharp}}:\mathfrak g \to \Lambda^2{\mathfrak g}^*$ defined by
\[
\phi^{\sharp^{\sharp}}(X_i)(X_j,X_k)=\phi^{\sharp}(X_i,X_j)(X_k)=\phi(X_i,X_j,X_k),\quad \{X_i\}\in \mathfrak g
\]
which can be interpreted by the map $\phi_i:\mathfrak g\to {\mathfrak g}^*$ for every $i:=1,...,dim \mathfrak g$.

We will write $(r,\phi,n)\sim (r',\phi',n')$ ($(r,\phi,n)\sim_\mathcal{A}(r',\phi',n')$ if we want to indicate $\mathcal{A}$).
	\end{definition}

\rm

 \section{$r$-$qn$ structures on Lie algebras}\label{method}
 In this section we describe how we can get all $r$-$qn$ structures on $\mathfrak g$, equivalently right-invariant $P$-$qN$ structures on $G$. For this purpose we rewrite the five conditions of Proposition \ref{inft-ver-right-1} in terms of coordinates. Throughout this section we denote the basis $\{X_i\}$ and the dual basis $\{ X^j\}$ for Lie algebras $\mathfrak g$ and $\mathfrak g^*$, respectively.

First, we write the structural constants $f^{k}_{ij}$ of the Lie algebra $\mathfrak g$, in terms of adjoint representation ${\mathcal {X}_{i}}$, and antisymmetric matrices ${\mathcal Y^i}$, as
 \begin{equation}\label{Matrix}
 f^{k}_{ij}=-(\mathcal{Y}^{k})_{ij},\quad f^{k}_{ij}=-(\mathcal {X}_{i})_{j}^{~k}.
 \end{equation}

\noindent{\bf Condition (i)} Consider the tensor notation of the CYBE{\footnote {$
		\lcf r,r \rcf =[r_{12},r_{13}]+[r_{12},r_{23}]+[r_{13},r_{23}]=0,
		$,  where $
		r_{12}=r^{ij}X_i\otimes X_j\otimes 1$, $r_{13}=r^{ij}X_i\otimes 1\otimes X_j$ and $r_{23}=r^{ij} 1\otimes X_i\otimes X_j$. }}, (see \cite{Ko}). We can rewrite condition $\lcf r,r\rcf\equiv 0$ in the matrix form (see \cite{Zohreh})
\begin{equation}\label{Poisson}
  r{\mathcal Y}^ {i}r-r{\mathcal X}_{l}r^{il}-r^{il}{\mathcal X}_{l}^{t}r=0,\quad i:=1,...,dim \mathfrak g.
\end{equation}

  \noindent{\bf Condition (ii)}  We write the condition (\ref{con2}) for two base elements ${X_i}$ and ${X_j}$ in $\mathfrak g$ and we get
 \[
n^{k}_{~i}n^{l}_{~j}f^{m}_{kl}- n^{k}_{~i}n^{m}_{~l}f^{l}_{kj}-n^{k}_{~j}f^{l}_{ik}n^{m}_{~l}+f^{k}_{ij}n^{l}_{~k}n^{m}_{~l}=\phi_{ijk}r^{kl}.
 \]
 By using (\ref{Matrix}) it can be rewritten in the matrix form
 \begin{equation}\label{Nijenhuis}
-n^l
_{~i}n^{t}{\mathcal X}_{l}+n^{l}_{~i}{\mathcal X}_{l}n^{t}- {\mathcal X} _in^{t}n^{t}+n^{t}{\mathcal X}_{i}n^{t}=\phi_{i}r,\quad i:=1,...,dim \mathfrak g,
 \end{equation}
 where $\phi_i:=\phi(X_i)=\phi_{ijk}X^{j}\wedge X^{k}$.

   \noindent{\bf Condition (iii)} The cocycle condition (\ref{con3}) for $\phi$ in the base elements is
   \[
   \partial \phi(X_i,X_j,X_k,X_l)=-f_{ij}^{m}\phi_{mkl}+f_{ik}^{m}\phi_{mjl}-f_{il}^{m}\phi_{mjk}-f_{jk}^{m}\phi_{mil}+f_{jl}^{m}\phi_{mik}-f_{kl}^{m}\phi_{mij}=0,
    \]
   which can be rewritten in the matrix form by using (\ref{Matrix}) as follows
   \begin{equation}\label{closed1}
   {\mathcal X}_{j}\phi_{k}-{\mathcal X}_{k}\phi_{j}+{\mathcal Y}^{m}(\phi_{m})_{jk}+({\mathcal Y}^{m})_{jk}\phi_{m}+\phi_{k}{\mathcal X}^{t}_{j}-\phi_{j}{\mathcal X}^{t}_{k}=0,\quad j,k:=1,...,dim \mathfrak g,
   \end{equation}
   where elements of the matrix $\phi_{i}$ are $(\phi_{i})_{jk}=\phi_{ijk}$.
   
\noindent Using the relation (\ref{in}) for $\phi \in\Lambda^3{\mathfrak g}^*$ and endomorphism $n$ we have
\[
(i_{n}\phi)(X_i,X_j,X_k)=\phi(nX_i,X_j,X_k)+\phi(X_i,nX_j,X_k)+\phi(X_i,X_j,nX_k).
\]
Using (\ref{con3}), the cocycle condition for $(i_{n}\phi)$ is
\[
\begin{array}{rcl}
 \partial (i_{n}\phi)(X_i,X_j,X_k,X_l)= &-& f_{ij}^{m}(n_{m}^{~s}\phi_{skl}+n_{k}^{~s}\phi_{msl}+n_{l}^{~s}\phi_{mks})+f_{ik}^{m}(n_{m}^{~s}\phi_{sjl}+n_{j}^{~s}\phi_{msl}+n_{l}^{~s}\phi_{mjs})\\[3pt]
  &-& f_{il}^{m}(n_{m}^{~s}\phi_{sjk}+n_{j}^{~s}\phi_{msk}+n_{k}^{~s}\phi_{mjs})-f_{jk}^{m}(n_{m}^{~s}\phi_{sil}+n_{i}^{~s}\phi_{msl}+n_{l}^{~s}\phi_{mis})\\[3pt]
    &+& f_{jl}^{m}(n_{m}^{~s}\phi_{sik}+n_{i}^{~s}\phi_{msk}+n_{k}^{~s}\phi_{mis})-f_{kl}^{m}(n_{m}^{~s}\phi_{sij}+n_{i}^{~s}\phi_{msj}+n_{j}^{~s}\phi_{mis})\\
    &=& 0
\end{array}
\] 
\noindent and then, using (\ref{Matrix}), we get the matrix relation
  \begin{equation}\label{closed2}
\begin{array}{rcl}
&~&{\mathcal X}_{j}n\phi_{k}+{\mathcal X}_{j}\phi_{s}n_{k}^{~s}+{\mathcal X}_{j}\phi_{k}n^{t}-{\mathcal X}_{k}n\phi_{j}-{\mathcal X}_{k}\phi_{s}n_{j}^{~s}-{\mathcal X}_{k}\phi_{j}n^{t}-{\mathcal Y}^{m}(n\phi_{j})_{mk}+\\[3pt]
&&{\mathcal Y}^{m}(n\phi_{m})_{jk}-{\mathcal Y}^{m}(n\phi_{m})_{kj}+({\mathcal X}_{j}n)_{k}^{~s}\phi_{s}+({\mathcal Y}^{m})_{jk}n\phi_{m}+({\mathcal Y}^{m})_{jk}\phi_{m}n^t+\\[3pt]
&& n\phi_{k}{\mathcal X}_{j}^{t}+n\phi_{k}{\mathcal X}_{j}^{t}+n_{k}^{~s}\phi_{s}{\mathcal X}_{j}^{t}-n\phi_{j}{\mathcal X}_{k}^{t}-n\phi_{j}{\mathcal X}_{k}^{t}-n_{j}^{~s}\phi_{s}{\mathcal X}_{k}^{t}=0\\
\end{array}
\end{equation}
   
  \noindent{\bf Condition (iv)} For every element $X^{i}$ in $\mathfrak g^*$, $n\circ r^{\sharp}(X^i)=r^{\sharp}\circ n^t(X^i)$ implies
  \begin{equation}\label{Con1}
  (nr)^i_{~j}=(rn^t)^i_{~j}, \quad i,j=1,...,dim(\mathfrak g),
  \end{equation}
  where $n$ and $r$ are the corresponding matrices to the linear operator $n$ and the map $r^{\sharp}$ and $nX_i=n_i^jX_j$ for $n_i^j\in \mathbb R$.

 \noindent{\bf Condition (v)} By applying $ X^{i}$ and $X^{j}$ in the concomitant and then, using (\ref{Matrix}), we get the matrix relation (see \cite{Zohreh})
 \begin{equation}\label{Con2}
 r{\mathcal X}_{j}n^{j}_{~i}+{\mathcal X}^{t}_{j}n^{j}_{~i}r-r{\mathcal X}_{i}n^{t}-n{\mathcal X}^{t}_{i}r=0,\quad i:=1,...,dim \mathfrak g.
 \end{equation}
Given a Lie algebra $\mathfrak g$ with finite dimension, by applying matrices ${\mathcal X}_i$ and ${\mathcal Y}^i$ using (\ref{Matrix}), in six relations (\ref{Poisson}), (\ref{Nijenhuis}), (\ref{closed1}),(\ref{closed2}), (\ref{Con1}), (\ref{Con2}) and solving them by help of mathematical softwares, one can find all $r$-$qn$ structures on $\mathfrak g$ and so, all right-invariant $P$-$qN$ structures on the Lie group $G$.

 \section{classification procedure}\label{cllasification}

Many of $r$-$qn$ structures obtained in section \ref{method} would be equivalent by an  Lie algebra automorphism. In this section we will proceed in five step to show how we can classify, up to an equivalence, all $r$-$qn$ structures on a Lie algebra. For clarity of results, we exemplify the procedure by classifying all $r$-$qn$ structures on two types of four dimensional real Lie algebras, symplectic real Lie algebra $A_{4,1}$ and non-symplectic Lie algebra $A_{4,8}$. We explain all details of classification procedure for Lie algebra $A_{4,1}$ \footnote{We use the notations of the four-dimensional real Lie algebras denoted in \cite{Abedi}, (see also \cite{Christ}).}. We did all computations using Maple.\\ 


\noindent The strategy is as follows.

\noindent {\bf First step.} using (\ref{Poisson}) we find all $r$-matrices on four-dimensional symplectic real Lie algebra $\mathfrak g$ and classify them up to equivalence 
\[
r\sim r'\quad \Leftrightarrow \quad\exists \mathcal{A}\in Aut(\mathfrak{g}) \quad \mathcal{A}r\mathcal{A}^t=r'.
\]

\noindent  {\bf Second step.} we take a representative of each class of $r$-matrices in first step, and find all endomorphisms $n$ on $\mathfrak g$ which are compatible with the chosen $r$-matrix by solving relations (\ref{Con1}) and (\ref{Con2}); they give five equations on four-dimensions. Then we classify all obtained pairs $(r,,n)$ up to equivalence                                   
\begin{equation}\label{e1}(r,n')\sim_0 (r,n) \quad \Leftrightarrow \quad\exists \mathcal{A}\in Aut(\mathfrak{g}) \quad\mathcal{A}r\mathcal{A}^t=r \ \&\ \mathcal{A}n\mathcal{A}^{-1}=n'\, ,
\end{equation}
where $\sim _{0}$ indicates the equivalence for the couple ($r$,$n$) with the same $r$. 

 \noindent{\bf Third step.} Now we find all $3$-forms $\phi$ which satisfy in the relation (\ref{Nijenhuis}) for $(1,1)$-tensor fields $n$ we found in the second step. In order to, we write the skew symmetric maps $\phi_i$ in the matrix forms. In dimension four they are as follow
\begin{equation}\label{phis}
\begin{array}{rcl}
\phi_1=\left(\begin{array}{cccc}
	0& 0& 0 & 0\\ 0& 0& \phi_{123} & \phi_{124}\\ 0& -\phi_{123}& 0 & \phi_{134}\\ 0& -\phi_{124}& -\phi_{134} & 0
	\end{array} \right)\quad &&	\phi_2=\left(\begin{array}{cccc}
	0& 0& -\phi_{123} & -\phi_{124}\\ 0& 0& 0 & 0 \\ \phi_{123}& 0& 0 & \phi_{234}\\ \phi_{124}& 0& -\phi_{234} & 0
	\end{array} \right)\\
&\\
	\phi_3=\left(\begin{array}{cccc}
	0& \phi_{123}& 0 & -\phi_{134}\\ -\phi_{123}& 0& 0 & -\phi_{234} \\ 0& 0& 0 & 0\\ \phi_{134}& \phi_{234}& 0 & 0
	\end{array} \right)\quad &&	\phi_4=\left(\begin{array}{cccc}
	0&\phi_{124} & \phi_{134} & 0\\ -\phi_{124}& 0& \phi_{234} & 0 \\ -\phi_{134}& -\phi_{234}& 0 & 0\\ 0& 0& 0 & 0
	\end{array}\right).\\
\end{array}
\end{equation}             
 {\bf Fourth step}: In this step we check if $(1,1)$-tensor fields $n$ from the second step and $3$-forms $\phi$ from the third step satisfy in relations (\ref{closed1}) and (\ref{closed2}), or they may have some new conditions.

\noindent {\bf Fifth step.} Finally, we classify all obtained pairs $(r,\phi,n)$ up to equivalence 
	\begin{equation}\label{e1}(r,\phi,n)\sim_0 (r,\phi',n) \quad \Leftrightarrow \quad\exists \mathcal{A}\in Aut(\mathfrak{g}) \quad\mathcal{A}r\mathcal{A}^t=r \ \&\ \mathcal{A}n\mathcal{A}^{-1}=n\, \ \&\ \mathcal{A}^{t}\phi_i\mathcal{A}=\phi'_i\,  ,
		\end{equation}
for $i:=1,...,dim \mathfrak g$.	Here $\sim _{0}$ indicates the equivalence for $r$-$qn$ structures with the same $r$ and $n$.

	\begin{proposition} \label{Pro}
		If $\{r_\alpha\}_\Lambda$ is a set of all representatives of the equivalence relation $r\sim r'$ and $\{(r_\alpha,n_\beta)\}_{(\alpha,\beta)\in\Psi}$ is a set of all representatives of the equivalence relations
		$(r_\alpha,n)\sim_0(r_\alpha,n')$ and  $\{(r_\alpha,({\phi_i})_{\eta},n_\beta)\}_{(\alpha,\eta,\beta)\in\Gamma}$ is a set of all representatives of the equivalence relations
		$(r_\alpha,({\phi_i})_{\eta},n_\beta)\sim_0(r_\alpha,({\phi'_i})_{\eta},n_\beta)$, then $\{(r_\alpha,(\phi_i)_{\eta},n_\beta)\}_{(\alpha,\eta,\beta)\in\Gamma}$ is a set of representatives of the equivalence relation $(r,\phi_i,n)\sim(r',\phi'_i,n')$ (c.f. Definition \ref{equivalnt}).
	\end{proposition}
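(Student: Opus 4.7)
The plan is to view $\sim$ as the common refinement of three progressively coarser equivalences on $r$-$qn$ structures: the relation on $r$'s alone, the relation $\sim_0$ on pairs $(r,n)$ fixing the first coordinate, and the relation $\sim_0$ on triples $(r,\phi,n)$ fixing the first two. With this in mind, I would prove the claim in two halves: every $r$-$qn$ structure is $\sim$-equivalent to at least one element of the collected set (exhaustiveness), and no two distinct elements of the set are $\sim$-equivalent (irredundancy).

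For exhaustiveness, I would start from an arbitrary $r$-$qn$ structure $(r,\phi,n)$ and peel the three layers off in order. First I would pick $\mathcal{A}_1\in Aut(\mathfrak g)$ with $\mathcal{A}_1 r\mathcal{A}_1^t = r_\alpha$; transporting the whole triple through $\mathcal{A}_1$ produces an equivalent $r$-$qn$ structure of the form $(r_\alpha,\phi',n')$. Next, exhaustiveness of $\{(r_\alpha,n_\beta)\}_\Psi$ under $\sim_0$ gives $\mathcal{A}_2$ fixing $r_\alpha$ and carrying $n'$ to some $n_\beta$; I would transport the triple once more. Finally, the third level provides $\mathcal{A}_3$ stabilizing both $r_\alpha$ and $n_\beta$ and sending the surviving $\phi$-data to some $(\phi_i)_\eta$. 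The composition $\mathcal{A}_3\mathcal{A}_2\mathcal{A}_1\in Aut(\mathfrak g)$ then witnesses $(r,\phi,n)\sim(r_\alpha,(\phi_i)_\eta,n_\beta)$.

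For irredundancy, I would suppose $(r_\alpha,(\phi_i)_\eta,n_\beta)\sim_\mathcal{A}(r_{\alpha'},(\phi_i)_{\eta'},n_{\beta'})$ via a single automorphism $\mathcal{A}$, as required by Definition \ref{equivalnt}. The $r$-component alone says $r_\alpha\sim r_{\alpha'}$, hence $\alpha=\alpha'$ by the first-level representative property. The same $\mathcal{A}$ then fixes $r_\alpha$ and conjugates $n_\beta$ to $n_{\beta'}$, so $(r_\alpha,n_\beta)\sim_0(r_\alpha,n_{\beta'})$ and $\beta=\beta'$ at the second level. Finally $\mathcal{A}$ fixes both $r_\alpha$ and $n_\beta$, which at the third level forces $\eta=\eta'$.

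The main (and only) obstacle I foresee is the tacit naturality statement used throughout the peeling argument: pushing an $r$-$qn$ structure by $\mathcal{A}\in Aut(\mathfrak g)$ must again yield an $r$-$qn$ structure. I would verify this by observing that all five conditions of Proposition \ref{inft-ver-right-1} are assembled from operations that intertwine with Lie algebra automorphisms (the bracket, the coadjoint action, and the Chevalley--Eilenberg differential), so each condition is preserved under the simultaneous transport of $r$, $\phi$ and $n$. Once this is in place, the conclusion is essentially a formal statement about nested systems of representatives for a tower of equivalence relations.
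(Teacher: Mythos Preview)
Your proposal is correct and follows essentially the same two-part (exhaustiveness/irredundancy) argument as the paper's own proof. If anything, you are more careful than the paper: you explicitly track the three successive automorphisms $\mathcal{A}_1,\mathcal{A}_2,\mathcal{A}_3$ in the peeling argument (the paper collapses this and reuses a single $\mathcal{A}$ somewhat loosely), and you flag the naturality of $r$-$qn$ structures under $Aut(\mathfrak g)$, which the paper leaves implicit.
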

	\begin{proof}
		Consider an $r$-$qn$ structure $(r,\phi,n)$. There exist $r_\alpha$ and $\mathcal{A}$ such that $r_\alpha=\mathcal{A}r\mathcal{A}^t$. Take the representative $(r_\alpha,n_\beta)$ of $(r_\alpha, \mathcal{A}n\mathcal{A}^{-1})$. Then $(r_\alpha,n_\beta)$ represents the class of $(r,n)$ 
		under $\sim$. Now take the representative $(r_\alpha,(\phi_i)_\eta,n_\beta)$ of $(r_\alpha, \mathcal{A}^t\phi_i\mathcal{A},n_\beta)$. Then, It is easy to see that $(r_\alpha,(\phi_i)_{\eta},n_\beta)$ represents the class of $(r,\phi_i,n)$ 
		under $\sim$.
		
		Moreover, different elements of $\{(r_\alpha,(\phi_i)_{\eta},n_\beta)\}_{(\alpha,\eta,\beta)\in\Gamma}$ represent different elements of $\sim$. Indeed, if $(r_\alpha,(\phi_i)_{\eta},n_\beta)\sim(r_{\alpha'},(\phi_i)_{\eta'},n_{\beta'})$, then there is $\mathcal{A}$ such that
			\[
			r_\alpha=\mathcal{A}r_{\alpha'}\mathcal{A}^t \quad \& \quad n_\beta=\mathcal{A}n_{\beta'}\mathcal{A}^{-1}\quad \& \quad (\phi_i)_\eta=\mathcal{A}^t(\phi_i)_{\eta'}\mathcal{A}.
			\]
	\noindent But then, by definition, $\alpha=\alpha'$ and hence $(r_\alpha,n_\beta)\sim_0(r_\alpha,n_{\beta'})$, thus $\beta=\beta'$; which means different elements of $\{(r_\alpha,n_\beta)\}_{(\alpha,\beta)\in\Psi}$ represent different elements of $\sim$. Therefore for $(r_\alpha,(\phi_i)_\eta ,n_\beta)\sim_0(r_\alpha,(\phi_i)_{\eta'},n_{\beta})$, implies $\eta=\eta'$. 
		
			\end{proof}
\noindent In the following we clarify the above procedure by describing the details for the Lie algebra $A_{4,1}$. We list the results in any step for Lie algebras $A_{4,1}$ and $A_{4,8}$.

\subsection{$r$-matrices} We consider the Lie algebra $A_{4,1}$ with non-zero commutators $[X_2,X_4]=X_1$ and $[X_3,X_4]=X_2$, from (\ref{Matrix}) we have the following matrices $\mathcal{X}_i$ and $\mathcal{Y}^i$
\begin{equation}\label{XY}
\begin{array}{rclcrcl}
 	\mathcal{X}_1=\left(\begin{array}{cccc}
 	0& 0& 0 & 0\\ 0& 0& 0 & 0\\ 0& 0& 0 & 0\\ 0& 0& 0 & 0
 	\end{array} \right)&& 	\mathcal{X}_2=\left(\begin{array}{cccc}
 0& 0& 0 & 0\\ 0& 0& 0 & 0 \\ 0& 0& 0 & 0\\ -1& 0& 0 & 0
 	\end{array} \right)\\
 &\\
	\mathcal{X}_3=\left(\begin{array}{cccc}
 	0& 0& 0 & 0\\ 0& 0& 0 & 0 \\ 0& 0& 0 & 0\\ 0& -1& 0 & 0
 	\end{array} \right) &&
\mathcal{X}_4=\left(\begin{array}{cccc}
 	0& 0& 0 & 0\\ 1& 0& 0 & 0 \\ 0& 1& 0 & 0\\ 0& 0& 0 & 0
 	\end{array}\right) \\
 &\\
	\mathcal{Y}^1=\left(\begin{array}{cccc}
 	0& 0& 0 & 0\\ 0& 0& 0 & -1\\ 0& 0& 0 & 0\\ 0& 1& 0 & 0
 	\end{array} \right) &&
\mathcal{Y}^2=\left(\begin{array}{cccc}
 	0& 0& 0 & 0\\ 0& 0& 0 & 0 \\ 0& 0& 0 & -1\\ 0& 0& 1 & 0
 	\end{array} \right) \\
 &\\
	\mathcal{Y}^3=\left(\begin{array}{cccc}
 	0& 0& 0 & 0\\ 0& 0& 0 & 0 \\ 0& 0& 0 & 0\\ 0& 0& 0 & 0
 	\end{array} \right) &&
 	\mathcal{Y}^4=\left(\begin{array}{cccc}
 	0& 0& 0 & 0\\ 0& 0& 0 & 0 \\ 0& 0& 0 & 0\\ 0& 0& 0 & 0
 	\end{array} \right).\\
 \end{array}
 \end{equation}

We list the classification of $r$-matrices for Lie algebra $A_{4,1}$ and $A_{4,8}$ in the table 1. Note that we classified all such structures on four dimensional symplectic real Lie algebras in \cite{Zohreh}. For self containing of the paper we bring the result of Lie algebra $A_{4,1}$ (see table 1 in \cite{Zohreh}).\\
\begin{center}
	{\footnotesize   \bf{Table 1.}} \label{TT}
	{\footnotesize Classification of $r$-matrices on four-dimensional real Lie algebra $A_{4,1}$ and $A_{4,8}$. }\\   \begin{tabular}{| l | l l l | p{15mm} }
		
		\hline\hline
		{\scriptsize  $A_{4,1}$}   &{\scriptsize  Equivalence classes of $r$-matrices}
		& {\scriptsize  $$}& {\scriptsize  $$}
		\smallskip\\
		\hline
		\smallskip
		{\scriptsize  $f_{24}^1=1$}& 
		{\scriptsize  $\ast \; c^{12}X_1\wedge X_2+c^{13}X_1 \wedge X_3+c^{14}X_1\wedge X_4+c^{23}X_2 \wedge X_3$} &  {\scriptsize$c^{12},c^{13}\in \mathbb R, $} &  {\scriptsize$c^{14},c^{23}\in \mathbb R-\{0\}.$}\\[3pt]
		
		{\scriptsize  $f_{34}^2=1$} &{\scriptsize  $c^{12}X_1\wedge X_2+c^{13}X_1 \wedge X_3+c^{23}X_2\wedge X_3$}&   
		{\scriptsize  $c^{12},c^{13}\in \mathbb R,  $}&
		{\scriptsize  $c^{23}\in \mathbb R- \{0\}.$}\\[3pt]
		
		{\scriptsize  $$}	&
		{\scriptsize  $c^{12}X_1\wedge X_2+c^{13}X_1 \wedge X_3+c^{14}X_1\wedge X_4 $} &
		{\scriptsize  $c^{12},c^{13}\in \mathbb R,  $}&
		{\scriptsize  $c^{14}\in \mathbb R- \{0\}.$}\\[3pt]
		
		{\scriptsize  $$} &{\scriptsize  $c^{12}X_1\wedge X_2+c^{13}X_1 \wedge X_3$}&  {\scriptsize  $c^{12}\in \mathbb R,$}  &
		{\scriptsize  $c^{13}\in  \mathbb R^{+}-\{0\}.$} \\[3pt]
		
		{\scriptsize  $$}  &{\scriptsize  $c^{12}X_1\wedge X_2+c^{13}X_1 \wedge X_3$}&  {\scriptsize  $c^{12}\in \mathbb R,$}  &
		{\scriptsize  $c^{13}\in  \mathbb R^{-}-\{0\}.$} \\[3pt]
		
		{\scriptsize  $$} &{\scriptsize  $c^{12}X_1\wedge X_2$} &
		{\scriptsize  $$}&  {\scriptsize  $c^{12}\in \mathbb R-\{0\}.$}  \\

		\hline
		{\scriptsize  $A_{4,8}$}   &{\scriptsize  Equivalence classes of $r$-matrices}
		& {\scriptsize  $$}& {\scriptsize  $$}
		\smallskip\\
		\hline
		\smallskip
		{\scriptsize  $f_{23}^1=1$}& 	 
		{\scriptsize  $c^{12}X_1\wedge X_2+c^{23}(\displaystyle \frac{c^{23}}{c^{24}}X_1 \wedge X_3+X_2 \wedge X_3+X_1\wedge X_4)+c^{24}X_2\wedge X_4$} &  {\scriptsize$c^{12},c^{23}\in \mathbb R, $} &  {\scriptsize$c^{24},\in \mathbb R-\{0\}.$}\\[3pt]
		
		{\scriptsize  $f_{24}^2=1$}& 	 
		{\scriptsize  $c^{13}X_1\wedge X_3+c^{23}(\displaystyle \frac{c^{23}}{c^{34}}X_1 \wedge X_2+X_2\wedge X_3-X_1 \wedge X_4)+c^{34}X_3\wedge X_4$} &  {\scriptsize$c^{12},c^{23}\in \mathbb R, $} &  {\scriptsize$c^{34},\in \mathbb R-\{0\}.$}\\[3pt]
		
		{\scriptsize  $f_{34}^3=-1$}& 	 
		{\scriptsize  $c^{12}X_1\wedge X_2+c^{13}X_1 \wedge X_3+c^{14}X_1\wedge X_4$} &  {\scriptsize$c^{12},c^{13}\in \mathbb R, $} &  {\scriptsize$c^{14},\in \mathbb R-\{0\}.$}\\[3pt]
		
		{\scriptsize  $$}& 	 
		{\scriptsize  $c^{12}X_1\wedge X_2+c^{13}X_1 \wedge X_3$} &  {\scriptsize$ $} &  {\scriptsize$c^{12},c^{13}\in \mathbb R-\{0\}.$}\\[3pt]
		
		{\scriptsize  $$}& 	 
		{\scriptsize  $c^{12}X_1\wedge X_2+c^{13}X_1 \wedge X_3$} &  {\scriptsize$c^{13}\in \mathbb R, $} &  {\scriptsize$c^{12},\in \mathbb R-\{0\}.$}\\[3pt]
		
		{\scriptsize  $$}& 	 
		{\scriptsize  $c^{12}X_1\wedge X_2+c^{13}X_1 \wedge X_3$} &  {\scriptsize$c^{12}\in \mathbb R, $} &  {\scriptsize$c^{13},\in \mathbb R-\{0\}.$}\\[3pt]
		
		\hline
	\end{tabular}
\end{center}
\vspace{3mm} 
 \subsection{$r$-$qn$ structures }
 We take a representative on each class
\[
\begin{array}{rclcrclcrcl}
&& r_0=X_1\wedge X_4-X_2\wedge X_3, && r_1=X_1\wedge X_2-X_2\wedge X_3,&&  r_2=X_1\wedge X_2-X_1\wedge X_3+X_1\wedge X_4,\\[3pt]

&& r_3=X_1\wedge X_2+X_1\wedge X_3,&& r_4=X_1\wedge X_2-X_1\wedge X_3,&& r_5=X_1\wedge X_2.
\end{array}
\]

Now, we find all $r$-$qn$ structures on Lie algebra $A_{4,1}$ for each $r$-matrices $r_i$ $(i:=0,...,5)$.

It is easy to see that every $3$-form on this Lie algebra is close or equivalently is a $3$-cocycle, in fact using (\ref{con3}), we have
\[
\partial\phi(X_1,X_2,X_3,X_4)=\phi(X_1,X_1,X_3)-\phi(X_2,X_1,X_2)=0.
\]
The same we see that $i_n\phi$ is also a $3$-cosycle, since
\[
\partial(i_n\phi)(X_1,X_2,X_3,X_4)=i_n\phi(X_1,X_1,X_3)-i_n\phi(X_2,X_1,X_2),
\]
which vanishes apart from whatever $i_n\phi$ is. Therefore, it only needs to be solved the equations (\ref{Nijenhuis}), (\ref{Con1}) and (\ref{Con2}).

Take a generic $(1,1)$-tensor field $n=\sum_{i,j=1}^{4}n^i_{j}X_i\otimes X^j$. By inserting $\mathcal{X}_i$ and $\mathcal{Y}^i$ (\ref{XY}), the matrix forms $n$ in the relations  (\ref{Con1}) and (\ref{Con2}) we find all $(1,1)$-tensor fields $n_i$ which are compatible with $r_i$. For example for the two first $r$-matrices we find the compatible couples $(r_0,n_{r_0})$ and $(r_1,n_{r_1})$ with the following $(1,1)$-tensor fields,
\[
\begin{array}{rcl}
	n_{r_0}&=&\left(\begin{array}{cccc}
	n_1& -n_2& n_4& 0\\ 0& n_3& 0 & n_4\\ 0& 0& n_3 & n_2\\ 0& 0& 0 & n_1
	\end{array} \right),\\
	&&\\
	n_{r_1}&=&\left(\begin{array}{cccc}
	n_1+n_2& 0& n_3-n_2& n_4\\ 0& n_1+n_3& 0 & n_5\\ n_1& 0& n_3 & n_6\\ 0& 0& 0 & n_2
	\end{array} \right).
	\end{array}
\]

\noindent We indicate by $n_{r_i}$, the $(1,1)$-tensor fields compatible with $r_i$; and for simplicity, we used the notation $n_i$ for the element of matrices $n_{r_i}$ instead of $n_i^j$.

Finally, by inserting the matrices (\ref{XY}) of $\mathcal{X}_i$, $\mathcal{Y}^i$ , matrix forms (\ref{phis}) of $\phi_i$ and the matrix forms of $r_i$ and $n_{r_i}$ for $(i:=1,...,4)$, in the relation (\ref{Nijenhuis}) and solving equations we will find $\phi_i$'s and thus $3$-forms $\phi$ on $\mathfrak g$.

We find $r$-$qn$ structures $(r_0,\phi_{r_0},n_{r_0})$ and $(r_1,\phi_{r_1},n_{r_1})$ where 
\[
\phi_{r_0}\equiv 0,\quad \phi_{r_1}=(n_1^2+n_1n_3-n_1n_2)X_1\wedge X_3\wedge X_4.
\]
So, there is no non-trivial $r$-$qn$ structure with $r_0$ on this Lie algebra. About $r_1$, we impose the following conditions in order to $\phi_i$'s are non-zero..
\begin{equation}\label{con-phi}
n_1\neq 0,\quad \mbox{ and} \quad n_1\neq n_2- n_3.
 \end{equation}
\\
All $r$-$qn$ structures on four-dimensional symplectic real Lie algebra $A_{4,1}$ and non-symplectic real Lie algebra $A_{4,8}$ are given in tables $2.a$ and $2.b$, respectively. Note that, the first column gives the non-vanishing structural constants of the Lie algebra $\mathfrak g^*$ defined by the corresponded $r$-matrix in column two. \\
\newpage
\begin{center}
{\footnotesize   \bf{Table 2.a.}} 
{\footnotesize $r$-$qn$ structures on four-dimensional symplectic real Lie algebras $A_{4,1}$ . }\\   \begin{tabular}{ | l | l l |p{40mm} }
	
	\hline\hline
{\scriptsize  $ {\tilde f}^{ij}_{k}$}&{\scriptsize  $\mbox{ $r$-matrix $r$}$}
	& {\scriptsize  $\mbox{$(1,1)$-tensor field $n$} $}\\
{\scriptsize  $$}&{\scriptsize  $\mbox{3-form $\phi$} $}
	& {\scriptsize  $$}
	\smallskip\\
	\hline
	\smallskip
	
 {\scriptsize ${ \tilde f}^{12}_3=1$}&
	{\scriptsize  $r=X_1 \wedge X_4-X_2 \wedge X_3$} &  {\scriptsize$n(X_1)=n_1X_1$} \\
	
 {\scriptsize ${\tilde f}^{13}_4=1$}&
		{\scriptsize  $\phi\equiv 0$} &  {\scriptsize$ n(X_2)=-n_2X_1+n_3X_2,$} \\
	
{\scriptsize  $$}& {\scriptsize  $ $}  &   {\scriptsize  $n(X_3)=n_4X_1+n_3X_3$}  \\

{\scriptsize  $$}& {\scriptsize  $ $}  &   {\scriptsize  $n(X_4)=n_4X_2+n_2X_3+n_1X_4$}  \\
	\hline
	 {\scriptsize ${ \tilde f}^{12}_3=1$}&
	 {\scriptsize  $r=X_1 \wedge X_2+X_1 \wedge X_4-X_1 \wedge X_3$} &  {\scriptsize$n(X_1)=(n_2-n_1)X_1$} \\
	 
	 {\scriptsize  ${\tilde f}^{12}_4=1$}& {\scriptsize  $ \phi=-(n_1^2+n_1n_6)X^2\wedge X^3\wedge X^4$}  &   {\scriptsize  $n(X_2)=(n_3-n_4)X_1+(n_2+n_6)X_2$}  \\
	 
	 {\scriptsize  $$}& {\scriptsize  $ $}  &   {\scriptsize  $n(X_3)=n_3X_1+(n_1+n_5+n_6)X_2$}  \\ 
	 
	 {\scriptsize  $$}& {\scriptsize  $ $}  &   {\scriptsize  $\quad\quad\quad\quad +(n_2-n_1+n_6)X_3+n_1X_4$}  \\

	 {\scriptsize  $$}& {\scriptsize  $ $}  &   {\scriptsize  $ n(X_4)=n_4X_1+n_5X_2+n_6X_3+n_2X_4$}  \\ 
	 \hline
	 
 {\scriptsize ${ \tilde f}^{12}_4=1$}&
	{\scriptsize  $r=X_1 \wedge X_2-X_1 \wedge X_3$} &  {\scriptsize$n(X_1)=n_1X_1$} \\
	
{\scriptsize  $$}& {\scriptsize  $ \phi=(2n_1n_4-n_1^2-n_4^2)X^2\wedge X^3\wedge X^4$}  &   {\scriptsize  $n(X_2)=n_2X_1+n_4X_2$}  \\

{\scriptsize  $$}& {\scriptsize  $ $}  &   {\scriptsize  $n(X_3)=n_2X_1+(n_4-n_1)X_2+n_1X_3$}  \\

{\scriptsize  $$}& {\scriptsize  $ $}  &   {\scriptsize  $ n(X_4)=n_3X_1+n_5X_2+n_6X_3+n_4X_4$}  \\ 
	\hline
 {\scriptsize ${ \tilde f}^{13}_4=1$}&
	{\scriptsize  $r=X_1 \wedge X_2-X_2 \wedge X_3$} &  {\scriptsize$n(X_1)=(n_1+n_2)X_1+n_1X_3$} \\
	
{\scriptsize  $$}& {\scriptsize  $ \phi=(n_1^2+n_1n_3-n_1n_2)X^1\wedge X^3\wedge X^4$}  &   {\scriptsize  $n(X_2)=(n_1+n_3)X_2$}  \\
	
{\scriptsize  $$}& {\scriptsize  $ $}  &   {\scriptsize  $n(X_3)=(n_3-n_2)X_1+n_3X_3$}  \\

{\scriptsize  $$}& {\scriptsize  $ $}  &   {\scriptsize  $ n(X_4)=n_4X_1+n_5X_2+n_6X_3+n_2X_4$}  \\ 
	\hline
		
 {\scriptsize ${ \tilde f}^{ij}_k=0$}  & 	{\scriptsize  $r=X_1 \wedge X_2$} &{\scriptsize$n(X_1)=n_1X_1$}\\
		
{\scriptsize  $$}	  &{\scriptsize  $ \phi=(n_1^2-n_1n_6-n_1n_9-n_7n_8$}  & {\scriptsize  $n(X_2)=n_1X_2$}   \\
		
	{\scriptsize  $$}  	&  {\scriptsize  $~~~~+n_6n_9)X^1\wedge X^3\wedge X^4+(n_1n_4-n_4n_9 $}& {\scriptsize  $n(X_3)=n_2X_1+n_4X_2+n_6X_3+n_8X_4$} \\

{\scriptsize  $$} 	& {\scriptsize  $~~~~+n_5n_8)X^2\wedge X^3\wedge X^4 $}   & {\scriptsize  $ n(X_4)=n_3X_1+n_5X_2+n_7X_3+n_9X_4$}\\ 
		
		\hline
	\end{tabular}
\end{center}

	\begin{center}
{\footnotesize   \bf{Table 2.b.}} \label{TT}
{\footnotesize $r$-$qn$ structures on four-dimensional non-symplectic real Lie algebras  $A_{4,8}$ . }\\   \begin{tabular}{ | l | l l |p{40mm} }
	
	\hline\hline	
		{\scriptsize  $ {\tilde f}^{ij}_{k}$}&{\scriptsize  $\mbox{ $r$-matrix $r$}$}
		& {\scriptsize  $\mbox{$(1,1)$-tensor field $n$} $}\\
		{\scriptsize  $$}&{\scriptsize  $\mbox{3-form $\phi$} $}
		& {\scriptsize  $$}
		\smallskip\\
		\hline
		\smallskip
			
			{\scriptsize ${ \tilde f}^{23}_3=-1$}&
			{\scriptsize  $r=X_2 \wedge X_4$} &  {\scriptsize$n(X_1)=n_1X_1$} \\[3pt]
			
			{\scriptsize ${\tilde f}^{14}_3=-1$}&
			{\scriptsize  $\phi\equiv 0$} &  {\scriptsize$ n(X_2)=n_2X_2$} \\[3pt]
			
			{\scriptsize  ${\tilde f}^{24}_4=-1$}& {\scriptsize  $ $}  &   {\scriptsize  $n(X_3)=n_3X_2+n_1X_3+n_4X_4$}  \\[3pt]
			{\scriptsize  $$}& {\scriptsize  $ $}  &   {\scriptsize  $n(X_4)=n_2X_4$}  \\[3pt]
			
			\hline
			
			{\scriptsize ${ \tilde f}^{23}_2=-1$}&
			{\scriptsize  $r=X_3 \wedge X_4$} &  {\scriptsize$n(X_1)=n_1X_1$} \\[3pt]
			
			{\scriptsize ${\tilde f}^{14}_2=1$}&
			{\scriptsize  $\phi\equiv 0$} &  {\scriptsize$ n(X_2)=n_1X_2+n_2X_3+n_4X_4$} \\[3pt]
			
			{\scriptsize  $$}& {\scriptsize  $ $}  &   {\scriptsize  $n(X_3)=n_3X_3$}  \\[3pt]
			{\scriptsize  $$}& {\scriptsize  $ $}  &   {\scriptsize  $n(X_4)=n_3X_4$}  \\[3pt]
			
			\hline
			
			{\scriptsize ${ \tilde f}^{12}_2=1$}&
			{\scriptsize  $r=X_1 \wedge X_2+X_1 \wedge X_3+X_1 \wedge X_4$} &  {\scriptsize$n(X_1)=(n_1+n_2)X_1$} \\[3pt]
			
			{\scriptsize ${\tilde f}^{13}_3=-1$}&
			{\scriptsize  $\phi= -(n_2n_5)X^2 \wedge X^3\wedge X^4$} &  {\scriptsize$ n(X_2)=-(n_3+n_4)X_1+(n_1+n_2-n_5)X_2$} \\[3pt]
			
			{\scriptsize  ${ \tilde f}^{12}_4=-1$}& {\scriptsize  $ $}  &   {\scriptsize  $n(X_3)=n_3X_1+n_1X_3$}  \\[3pt]
			{\scriptsize  ${ \tilde f}^{13}_4=1$}& {\scriptsize  $ $}  &   {\scriptsize  $n(X_4)=n_4X_1+n_5X_2+n_2X_3+(n_1+n_2)X_4$}  \\[3pt]
			
			\hline
				{\scriptsize ${ \tilde f}^{12}_4=-1$}&
				{\scriptsize  $r=X_1 \wedge X_2+X_1 \wedge X_3$} &  {\scriptsize$n(X_1)=(2n_1-n_2)X_1$} \\[3pt]
				
				{\scriptsize ${\tilde f}^{13}_4=1$}&
				{\scriptsize  $\phi= (2n_1^2-4n_1n_2+2n_2^2)X^1 \wedge X^2 \wedge X^4$} &  {\scriptsize$ n(X_2)=-n_3X_1+n_1X_2+(n_1-n_2)X_3$} \\[3pt]
				
				{\scriptsize  $$}& {\scriptsize  $\quad \quad+(4n_1n_2-2n_1^2-2n_2^2)X^1 \wedge X^3 \wedge X^4 $}  &   {\scriptsize  $n(X_3)=n_3X_1+(n_1-n_2)X_2+n_1X_3$}  \\[3pt]
				
				{\scriptsize  $$}& {\scriptsize  $ \quad \quad+(n_2-n_1)(n_5+n_6)X^2 \wedge X^3 \wedge X^4$}  &   {\scriptsize  $n(X_4)=n_4X_1+n_5X_2+n_6X_3+n_2X_4$}  \\[3pt]
				
				\hline
		\end{tabular}
		\end{center}
				\begin{center}
					{\footnotesize   \bf{Table 2.b.}} \label{TT}
					{\footnotesize $r$-$qn$ structures on four-dimensional non-symplectic real Lie algebras  $A_{4,8}$ . }\\   \begin{tabular}{ | l | l l |p{40mm} }
						
						\hline\hline	
						{\scriptsize  $ {\tilde f}^{ij}_{k}$}&{\scriptsize  $\mbox{ $r$-matrix $r$}$}
						& {\scriptsize  $\mbox{$(1,1)$-tensor field $n$} $}\\
						{\scriptsize  $$}&{\scriptsize  $\mbox{3-form $\phi$} $}
						& {\scriptsize  $$}
						\smallskip\\
						\hline
						\smallskip
			
			{\scriptsize ${ \tilde f}^{12}_4=-1$}&
			{\scriptsize  $r=X_1 \wedge X_2$} &  {\scriptsize$n(X_1)=n_1X_1$} \\[3pt]
			
			{\scriptsize $$}&
			{\scriptsize  $\phi= (n_5n_6-n_4n_7-n_1n_5)X^2 \wedge X^3 \wedge X^4$} &  {\scriptsize$ n(X_2)=n_1X_2$} \\[3pt]
			
			{\scriptsize  $$}& {\scriptsize  $ $}  &   {\scriptsize  $n(X_3)=n_2X_1+n_4X_2+n_6X_3$}  \\[3pt]
			
			{\scriptsize  $$}& {\scriptsize  $ $}  &   {\scriptsize  $n(X_4)=n_3X_1+n_5X_2+n_7X_3+n_1X_4$}  \\[3pt]
			
			\hline
			
			{\scriptsize ${ \tilde f}^{13}_4=1$}&
			{\scriptsize  $r=X_1 \wedge X_3$} &  {\scriptsize$n(X_1)=n_1X_1$} \\[3pt]
			
			{\scriptsize $$}&
			{\scriptsize  $\phi= (n_4n_7-n_5n_6-n_1n_7)X^2 \wedge X^3 \wedge X^4$} &  {\scriptsize$ n(X_2)=n_2X_1+n_4X_2+n_6X_3$} \\[3pt]
			
			{\scriptsize  $$}& {\scriptsize  $ $}  &   {\scriptsize  $n(X_3)=n_1X_3$}  \\[3pt]
			
			{\scriptsize  $$}& {\scriptsize  $ $}  &   {\scriptsize  $n(X_4)=n_3X_1+n_5X_2+n_7X_3+n_1X_4$}  \\[3pt]
			
			\hline
	\end{tabular}
\end{center}	
\subsection{Equivalence classes of $r$-$qn$ structures }	
\noindent We use the following automorphism group element (classified in \cite{Christ}, see also \cite{ReSe1}) of Lie algebra $A_{4,1}$
\[
	\mathcal A=\left(\begin{array}{cccc}
	a_{11} a_{16}^2 & a_7 a_{16} & a_3 & a_4\\ 0 & a_{11} a_{16}& a_7 & a_8\\0 & 0 & a_{11} & a_{12}\\ 0 & 0 & 0 & a_{16}
	\end{array} \right).
\]

Since for $r_0$ we have $\phi\equiv 0$, it means $(1,1)$-tensor $n$ is a Nijenhuis operator and in fact the couple $(r_0,n_{r_0})$ is an $r$-$n$ structure and these structures classified in \cite{Zohreh}, (see the table 3 of \cite{Zohreh}).

\noindent For the $r$-matrix $r_1=X_1\wedge X_2-X_2\wedge X_3$, we insert the above $\mathcal A$ in the relation $\mathcal A\circ r_1-r_1\circ \mathcal A^{-t}=0$ and we get
\begin{equation}\label{Ar1}
a_{16}=\frac{1}{(a_{11})^2},\quad a_3=\frac{(a_{11})^4-1}{(a_{11})^3},\quad a_7=0,
\end{equation}
The automorphism group $\mathcal A$ in the new expression, given by (\ref{Ar1}) is
\begin{equation}\label{Ar11}
	\mathcal A=\left(\begin{array}{cccc}
	\frac{1}{(a_{11})^3} & 0 & \frac{(a_{11})^4-1}{(a_{11})^3} & a_4\\ 0 & \frac{1}{a_{11}}& 0 & a_8\\0 & 0 & a_{11} & a_{12}\\ 0 & 0 & 0 & \frac{1}{(a_{11})^2}
	\end{array} \right).
\end{equation}
Since $det(\mathcal A)=\displaystyle \frac{1}{(a_{11})^5}$, $a_{11}\neq 0$ and other parameters $a_4,a_8,a_{12}$ can take any value.

\noindent Now, we find all equivalence classes of $(1,1)$-tensor fields $n_{r_1}$ such that $(r_1,n'_{r_1})\sim_0 (r_1,n_{r_1})$, where
\[
n'_{r_1}=\left(\begin{array}{cccc}
	n'_1+n'_2& 0& n'_3-n'_2& n'_4\\ 0& n'_1+n'_3& 0 & n'_5\\ n'_1& 0& n'_3 & n'_6\\ 0& 0& 0 & n'_2
	\end{array} \right).
\]
Therefore, we have all equivalence classes of the couples $(r_1,n^i_{r_1})$ corresponding to the $r$-matrix $r_1$. In order to do, we insert the automorphism group (\ref{Ar11}) in the relation $n_{r_1}\circ \mathcal A-\mathcal A\circ n'_{r_1}=0$, we obtain the following equations

\begin{itemize}

\item[(1)]
$n_2-n'_2=0,\quad  n_3-n'_3=0,\quad  (n_1+n_2)-(n'_1+n'_2)=0, \quad (n_1+n_3)-(n'_1+n'_3)=0$,

\item[(2)]
$  (a_{11})^4(n_3-n'_1-n'_3)+n'_1+n'_2-n_2=0,$ 
 
 \item[(3)]
 $a_8(a_{11})^2(n_2-n'_1-n'_3)+a_{11}n_5-n'_5=0,\quad (a_{11})^2(a_{11}n_6+a_{12}n_2-a_{12}n'_3)-n'_6=0,$
 
 \item [(4)]
 $(a_{11})^3(n_6a_{11}+a_4n_2-a_4n'_1-a_4n'_2-a_{12}n'_3+a_{12}n'_2)-a_{11}n'_4+n_4-n_6=0.$
\end{itemize}

The set of equations $(1)$ implies that $n_1:=n'_1$, $n_2:=n'_2$ and $n_3:=n'_3$ which means three parameters $n_1$, $n_2$ and $n_3$ are free. Note that, we mean by free parameters $n_i$ , the parameters for which different values get non-equivalent Nijenhuis structures belonging to the different equivalence classes. The equation $(2)$ implies that $a_{11}=\pm 1$. Applying $a_{11}=\pm 1$ in the equations $(3)$ and $(4)$ we get
\[
\begin{array}{rcl}
n_6\pm n'_6&=& \pm a_{12}(n_3-n_2),\\
n_5\pm n'_5& =& \pm a_8(n_1+n_3-n_2),\\
n_4\pm n'_4& =& \pm \left(a_{12}(n_3-n_2)+a_4n_1\right).\\
\end{array}
\]
From the first equation, if $n_2=n_3$ then $n_6=\pm n'_6$; it means that for the structures whose $n_2=n_3$ the parameter $n_6$ is free; note that in this case the structures with the opposite sign of $n_6$ are equivalent. For the structure whose $n_2\neq n_3$ the parameter $n_6$ can be any constant because of arbitrary parameter $a_{12}$. From the second and third equations, parameters $n_4$ and $n_5$ can be any arbitrary constant since parameters $a_8$ and $a_{12}$ are arbitrary. We indicate arbitrary constants $n_i$ by $c_i$. We mean by arbitrary constants $c_i$, the parameters such that for every different values of them, the corresponding Nijenhuis structures are equivalent belonging to the same class.

\noindent Finally, we get the following  equivalence classes of $(1,1)$-tensor $n$
\[
\begin{array}{rcl}
n_{r_1}^{(1)}&=&\left(\begin{array}{cccc}
	n_1+n_2& 0& n_3-n_2& c_4\\ 0& n_1+n_3& 0 & c_5\\ n_1& 0& n_3 & c_6\\ 0& 0& 0 & n_2
	\end{array} \right),\quad n_2\neq n_3\neq 0,\\
	&&\\
n_{r_1}^{(2)}&=&\left(\begin{array}{cccc}
	n_1+n_2& 0& n_3-n_2& c_4\\ 0& n_1+n_3& 0 & c_5\\ n_1& 0& n_3 & n_6\\ 0& 0& 0 & n_2	\end{array} \right),\quad n_6 \in {\mathbb R}^+,\; n_2=n_3.\\
\end{array}
\]
 Note that, all parameters are in $\mathbb R$ unless we mention some conditions for them. Therefore, we get the equivalence classes of couples $(r_1,n_{r_1}^{(1)})$ and $(r_1,n_{r_1}^{(2)})$ with additional conditions (\ref{con-phi}).

Now we consider the equivalence class of $3$-form $\phi$. We should solve the equations
 \begin{equation}\label{equ-phi}
 {\mathcal A}^t\phi_i \mathcal A-\phi'_i=0,\quad  \mbox{for} \quad i:=1,...,4,
 \end{equation}
where $\phi'_i$'s are the same as $\phi_i$ defined in (\ref{phis}), the difference is that we denote the elements of  $\phi'_i$'s by $\phi'_{ijk}$ instead of $\phi_{ijk}$, and the automorphism group $\mathcal A$ is (\ref{Ar11}) with the condition $a_{11}=\pm 1$.

For $\phi_1$, the equation (\ref{equ-phi}) gets $n_1^2+n_1n_3-n_1n_2=0$ which can not be zero, so we have no more equivalence class of $\phi$.
	
One may check the equation (\ref{equ-phi}) for $\phi_1$, $\phi_2$ and $\phi_3$ and get the same results, but actually it dose not need to be done because according to the Definition \ref{equivalnt}, $\phi \sim \phi'$ if and only if $\phi_i \sim \phi'_i$ for all $i=1,...,4$.

Eventually, we have the triples $(r_1,\phi_{r_1},n^{(1)}_{r_1})	$ and  $(r_1,\phi_{r_1},n^{(2)}_{r_1})	$ as equivalence classes of $r$-$qn$ structures on Lie algebra $A_{4,1}$ with $r$-matrix $r_1$. 

All equivalence classes of $r$-$qn$ structures on four-dimensional symplectic real Lie algebra $A_{4,1}$ and non-symplectic real Lie algebra $A_{4,8}$ are listed in tables $3.a$ and $3.b$, respectively.    

Note that we use the following Lie algebra automorphism for Lie algebra $A_{4,8}$ in the procedure of classification
\[
	\mathcal A=\left(\begin{array}{cccc}
	a_{11} a_{16} & a_{12}a_6 & a_{11}a_{8} & a_4\\ 0 &  a_{6}& 0 & a_8\\0 & 0 & a_{11} & a_{12}\\ 0 & 0 & 0 & 1
	\end{array} \right).
\]

\newpage
\begin{center}
{\footnotesize   \bf{Table 3.a.}} \label{TT}
{\small Equivalence classes of $r$-$qn$ structures on four-dimensional symplectic real Lie algebra $A_{4,1}$}

  \begin{tabular}{ | l  l | l | p{14mm} }
\hline\hline

{\scriptsize  $r$-matrix $r$}
&{\scriptsize $(1,1)$-tensor field $n$ }	&{\scriptsize $Comments$ }\\
 {\scriptsize  $ $}
&{\scriptsize $3$-form $\phi$ }	&{\scriptsize $$ }\\
\hline
\smallskip
	{\scriptsize  $r=X_1 \wedge X_2-X_1 \wedge X_3$} &{\scriptsize$n(X_1)=n_1X_1$}  &  	{\scriptsize $n_1\neq n_4$}\\

 {\scriptsize  $ \phi=(2n_1n_4-n_1^2-n_4^2)X^2\wedge X^3\wedge X^4$}  & {\scriptsize  $n(X_2)=n_2X_1+n_4X_2$}    &	{\scriptsize  $ n_1\; \mbox{or}\; n_4\neq 0$}\\

	{\scriptsize  $$}& {\scriptsize  $n(X_3)=n_2X_1-n_2X_2+n_1X_3$}  &{\scriptsize  $$}  \\

{\scriptsize  $$} &{\scriptsize  $ n(X_4)=c_3X_1+n_5X_2+n_6X_3+n_4X_4$}&	{\scriptsize  $$}   \\ 
\hline

	{\scriptsize  $r=X_1 \wedge X_2-X_1 \wedge X_3$}& {\scriptsize$n(X_1)=n_1X_1$} & {\scriptsize  $n_1\neq n_4$}	\\

{\scriptsize  $ \phi=(2n_1n_4-n_1^2-n_4^2)X^2\wedge X^3\wedge X^4$}  & {\scriptsize  $n(X_2)=n_2X_1+n_4X_2$}    &{\scriptsize  $ n_1\; \mbox{or}\; n_4\neq 0$}	\\

	{\scriptsize  $$}  & {\scriptsize  $n(X_3)=n_2X_1+(n_4-n_1)X_2+n_1X_3$} & {\scriptsize  $n_2\neq n_1-n_4 $}\\

{\scriptsize  $$} &{\scriptsize  $ n(X_4)=c_3X_1+n_5X_2+n_6X_3+n_4X_4$} &{\scriptsize  $$}  \\ 
\hline


	{\scriptsize  $r=X_1 \wedge X_2-X_2 \wedge X_3$}& {\scriptsize$n(X_1)=(n_1+n_2)X_1+n_1X_3$}  & {\scriptsize  $n_2\neq n_3 $}	\\

{\scriptsize  $ \phi=(n_1^2+n_1n_3-n_1n_2)X^1\wedge X^3\wedge X^4$} &  {\scriptsize  $n(X_2)=(n_1+n_3)X_2$}     & {\scriptsize  $ n_1\neq n_2-n_3$}\\

 	{\scriptsize  $$}  &{\scriptsize  $n(X_3)=(n_3-n_2)X_1+n_3X_3$}&{\scriptsize  $$}  \\

{\scriptsize  $$}  &{\scriptsize  $ n(X_4)=c_4X_1+c_5X_2+c_6X_3+n_2X_4$}&{\scriptsize  $$}  \\ 
\hline
 	{\scriptsize  $r=X_1 \wedge X_2-X_2 \wedge X_3$}& {\scriptsize$n(X_1)=(n_1+n_2)X_1+n_1X_3$}   & {\scriptsize  $n_1\neq 0 $}\\

{\scriptsize  $ \phi=(n_1^2+n_1n_3-n_1n_2)X^1\wedge X^3\wedge X^4$}& {\scriptsize  $n(X_2)=(n_1+n_2)X_2$}      &	{\scriptsize  $n_6\in \mathbb R^+ $} \\

	{\scriptsize  $$}  &{\scriptsize  $n(X_3)=n_2X_3$}& {\scriptsize  $$}  \\

{\scriptsize  $$} & {\scriptsize  $ n(X_4)=c_4X_1+c_5X_2+n_6X_3+n_2X_4$}&{\scriptsize  $$}  \\ 
\hline

	{\scriptsize  $r=X_1 \wedge X_2$} &{\scriptsize$n(X_1)=n_1X_1+n_2X_3+n_3X_4$}  &  {\scriptsize  $n_5n_8\neq 0 $}	\\

 {\scriptsize  $ \phi=-(n_7n_8)X^1\wedge X^3\wedge X^4$}  &{\scriptsize  $n(X_2)=n_1X_2+n_4X_3+n_5X_4$}    &	 {\scriptsize  $n_7n_8\neq 0  $}\\

	{\scriptsize  $~~~~+(n_5n_8)X^2\wedge X^3\wedge X^4 $}  &{\scriptsize  $n(X_3)=n_6X_3+n_7X_4$}& {\scriptsize  $$}  \\

	{\scriptsize  $$}  &{\scriptsize  $ n(X_4)=n_8X_3+n_1X_4$}&{\scriptsize  $$}  \\ 
\hline
	{\scriptsize  $r=X_1 \wedge X_2$} & {\scriptsize$n(X_1)=n_1X_1+n_2X_3+n_3X_4$} & {\scriptsize  $n_7n_8\neq 0 $}	\\

{\scriptsize  $ \phi=-(n_7n_8)X^1\wedge X^3\wedge X^4$} & {\scriptsize  $n(X_2)=n_1X_2+n_4X_3$}     &	{\scriptsize  $$} \\

	{\scriptsize  $ $} & {\scriptsize  $n(X_3)=n_6X_3+n_7X_4$} &{\scriptsize  $$}  \\

{\scriptsize  $$} &{\scriptsize  $ n(X_4)=n_8X_3+n_1X_4$} &{\scriptsize  $$}  \\ 
\hline
	{\scriptsize  $r=X_1 \wedge X_2$}& {\scriptsize$n(X_1)=n_1X_1+n_2X_3+n_3X_4$}  &{\scriptsize  $n_5n_8\neq 0 $} 	\\

{\scriptsize  $ \phi=(n_5n_8)X^2\wedge X^3\wedge X^4$}& {\scriptsize  $n(X_2)=n_1X_2+n_4X_3+n_5X_4$}     &	{\scriptsize  $$} \\

	{\scriptsize  $ $}& {\scriptsize  $n(X_3)=n_6X_3$}  &{\scriptsize  $$}  \\

{\scriptsize  $$} &{\scriptsize  $ n(X_4)=n_8X_3+n_1X_4$} &{\scriptsize  $$}  \\ 
\hline
	{\scriptsize  $r=X_1 \wedge X_2$}& {\scriptsize$n(X_1)=n_1X_1+n_2X_3+n_3X_4$} & {\scriptsize  $n_4n_6\neq n_4n_9 $}\\

{\scriptsize  $ \phi=(n_1n_4-n_4n_9)X^2\wedge X^3\wedge X^4$}& {\scriptsize  $n(X_2)=n_1X_2+n_4X_3+n_5X_4$}    &	{\scriptsize  $$} \\

	{\scriptsize  $ $} & {\scriptsize  $n(X_3)=n_1X_3+n_7X_4$} &{\scriptsize  $$}  \\

	{\scriptsize  $$} &{\scriptsize  $ n(X_4)=n_9X_4$} &{\scriptsize  $$}  \\ 
\hline
	{\scriptsize  $r=X_1 \wedge X_2$} & {\scriptsize$n(X_1)=n_1X_1+n_2X_3+n_3X_4$}  &{\scriptsize  $n_7n_8\neq 0 $} \\

{\scriptsize  $ \phi=-(n_7n_8)X^1\wedge X^3\wedge X^4$} & {\scriptsize  $n(X_2)=n_1X_2+n_4X_3+n_5X_4$}   &{\scriptsize  $n_8n_5=n_4n_9-n_4n_1 $}\\

	{\scriptsize  $ $} & {\scriptsize  $n(X_3)=n_1X_3+n_7X_4$} &{\scriptsize  $$}  \\

{\scriptsize  $$}&{\scriptsize  $ n(X_4)=n_8X_3+n_9X_4$}  &{\scriptsize  $$}  \\ 
\hline
	{\scriptsize  $r=X_1 \wedge X_2$}& {\scriptsize$n(X_1)=n_1X_1+n_2X_3+n_3X_4$}  &{\scriptsize  $ n_8n_5\neq n_4n_9-n_4n_1$} \\

{\scriptsize  $ \phi=(n_5n_8-n_1n_4-n_4n_9)X^2\wedge X^3\wedge X^4$}& {\scriptsize  $n(X_2)=n_1X_2+n_4X_3+n_5X_4$}     &	{\scriptsize  $$} \\

	{\scriptsize  $$} & {\scriptsize  $n(X_3)=n_1X_3$} &{\scriptsize  $$}  \\

	{\scriptsize  $$} &{\scriptsize  $ n(X_4)=n_8X_3+n_9X_4$} &{\scriptsize  $$}  \\ 
\hline

 	{\scriptsize  $r=X_1 \wedge X_2-X_1 \wedge X_3+X_1 \wedge X_4$}& {\scriptsize$n(X_1)=(n_2-n_1)X_1$} & {\scriptsize  $n_1\neq 0$}\\

{\scriptsize  $ \phi=-(n_1^2+n_1n_6)X^2\wedge X^3\wedge X^4$} & {\scriptsize  $n(X_2)=(c_3-c_4)X_1+(n_6+n_2)X_2$}     & {\scriptsize  $n_1\neq n_6 $}\\

	{\scriptsize  $$} & {\scriptsize  $n(X_3)=c_3X_1+(n_1+n_5+n_6)X_2$} &{\scriptsize  $$}  \\ 

	{\scriptsize  $$} & {\scriptsize  $\quad \quad \quad \quad+(n_2-n_1+n_6)X_3+n_1X_4$} &{\scriptsize  $$}  \\ 

{\scriptsize  $$}&{\scriptsize  $ n(X_4)=c_4X_1+n_5X_2+n_6X_3+n_2X_4$}  &{\scriptsize  $$}  \\ 
\hline
\end{tabular}
\end{center}
\newpage
\begin{center}
{\footnotesize   \bf{Table 3.b.}} \label{TT}
{\small Equivalence classes of $r$-$qn$ structures on four-dimensional non-symplectic real Lie algebra $A_{4,8}$}
\begin{tabular}{ | l  l | l | p{14mm} }
\hline\hline

{\scriptsize  $r$-matrix $r$}
&{\scriptsize $(1,1)$-tensor field $n$ }	&{\scriptsize $Comments$ }\\
{\scriptsize  $ $}
&{\scriptsize $3$-form $\phi$ }	&{\scriptsize $$ }\\
\hline
\smallskip
{\scriptsize  $r=X_2 \wedge X_4$} &{\scriptsize$n(X_1)=n_1X_1$}  &  	{\scriptsize $$}\\

{\scriptsize  $ \phi\equiv 0$}  & {\scriptsize  $n(X_2)=n_2X_2$}    &	{\scriptsize  $ $}\\

{\scriptsize  $$}& {\scriptsize  $n(X_3)=n_1X_3$}  &{\scriptsize  $$}  \\

{\scriptsize  $$} &{\scriptsize  $ n(X_4)=n_2X_4$}&	{\scriptsize  $$}   \\ 
\hline

{\scriptsize  $r=X_2 \wedge X_4$} &{\scriptsize$n(X_1)=n_1X_1$}  &  	{\scriptsize $c_3\in\mathbb R -\{0\}$}\\

{\scriptsize  $ \phi\equiv 0$}  & {\scriptsize  $n(X_2)=n_2X_2$}    &	{\scriptsize  $ $}\\

{\scriptsize  $$}& {\scriptsize  $n(X_3)=c_3X_2+n_1X_3$}  &{\scriptsize  $$}  \\

{\scriptsize  $$} &{\scriptsize  $ n(X_4)=n_2X_4$}&	{\scriptsize  $$}   \\ 
\hline

{\scriptsize  $r=X_2 \wedge X_4$} &{\scriptsize$n(X_1)=n_1X_1$}  &  	{\scriptsize $c_4\in\mathbb R -\{0\}$}\\

{\scriptsize  $ \phi\equiv 0$}  & {\scriptsize  $n(X_2)=n_2X_2$}    &	{\scriptsize  $c_3\in\mathbb R $}\\

{\scriptsize  $$}& {\scriptsize  $n(X_3)=c_3X_2+n_1X_3+c_4X_4$}  &{\scriptsize  $$}  \\

{\scriptsize  $$} &{\scriptsize  $ n(X_4)=n_2X_4$}&	{\scriptsize  $$}   \\ 
\hline

{\scriptsize  $r=X_3 \wedge X_4$} &{\scriptsize$n(X_1)=n_1X_1$}  &  	{\scriptsize $$}\\

{\scriptsize  $ \phi\equiv 0$}  & {\scriptsize  $n(X_2)=n_1X_2$}    &	{\scriptsize  $$}\\

{\scriptsize  $$}& {\scriptsize  $n(X_3)=n_3X_3$}  &{\scriptsize  $$}  \\

{\scriptsize  $$} &{\scriptsize  $ n(X_4)=n_3X_4$}&	{\scriptsize  $$}   \\ 
\hline

{\scriptsize  $r=X_3 \wedge X_4$} &{\scriptsize$n(X_1)=n_1X_1$}  &  	{\scriptsize $c_2\in\mathbb R -\{0\}$}\\

{\scriptsize  $ \phi\equiv 0$}  & {\scriptsize  $n(X_2)=n_1X_2+c_2X_3$}    &	{\scriptsize  $$}\\

{\scriptsize  $$}& {\scriptsize  $n(X_3)=n_3X_3$}  &{\scriptsize  $$}  \\

{\scriptsize  $$} &{\scriptsize  $ n(X_4)=n_3X_4$}&	{\scriptsize  $$}   \\ 
\hline

{\scriptsize  $r=X_3 \wedge X_4$} &{\scriptsize$n(X_1)=n_1X_1$}  &  	{\scriptsize $c_4\in\mathbb R -\{0\}$}\\

{\scriptsize  $ \phi\equiv 0$}  & {\scriptsize  $n(X_2)=n_1X_2+c_2X_3+c_4X_4$}    &	{\scriptsize  $c_2\in\mathbb R $}\\

{\scriptsize  $$}& {\scriptsize  $n(X_3)=n_3X_3$}  &{\scriptsize  $$}  \\

{\scriptsize  $$} &{\scriptsize  $ n(X_4)=n_3X_4$}&	{\scriptsize  $$}   \\ 
\hline

{\scriptsize  $r=X_1 \wedge X_2+X_1 \wedge X_3+X_1 \wedge X_4$} &{\scriptsize$n(X_1)=(n_1+n_2)X_1$}  &  	{\scriptsize $n_2\in\mathbb R -\{0\}$}\\

{\scriptsize  $ \phi=-n_2n_5X^2\wedge X^3\wedge X^4$}  & {\scriptsize  $n(X_2)=-(n_3+n_4)X_1+(n_1+n_2-n_5)X_2$}    &	{\scriptsize  $n_5\in\mathbb R -\{0\} $}\\

{\scriptsize  $$}& {\scriptsize  $n(X_3)=n_3X_1+n_1X_3$}  &{\scriptsize  $$}  \\

{\scriptsize  $$} &{\scriptsize  $ n(X_4)=n_4X_1+n_5X_2+n_2X_3+(n_1+n_2)X_4$}&	{\scriptsize  $$}   \\ 
\hline

{\scriptsize  $r=X_1 \wedge X_2+X_1 \wedge X_3$} &{\scriptsize$n(X_1)=(2n_1-n_2)X_1$}  &  	{\scriptsize $n_1\neq n_2$}\\

{\scriptsize  $ \phi=2(n_1-n_2)^2X^1\wedge X^2\wedge X^4$}  & {\scriptsize  $n(X_2)=-n_3X_1+n_1X_2+(n_1-n_2)X_3$}    &	{\scriptsize  $ n_5\neq -n_6$}\\

{\scriptsize  $\quad\quad-2(n_1-n_2)^2X^1\wedge X^3\wedge X^4 $}& {\scriptsize  $n(X_3)=n_3X_1+(n_1-n_2)X_2+n_1X_3$}  &{\scriptsize  $$}  \\

{\scriptsize  $\quad\quad +(n_2-n_1)(n_5+n_6)X^2\wedge X^3\wedge X^4$} &{\scriptsize  $ n(X_4)=n_4X_1+n_5X_2+n_6X_3+n_2X_4$}&	{\scriptsize  $$}   \\ 
\hline

{\scriptsize  $r=X_1 \wedge X_2$} &{\scriptsize$n(X_1)=n_1X_1$}  &  	{\scriptsize $n_5\in \mathbb R-\{0\}$}\\

{\scriptsize  $ \phi=(n_5n_6-n_1n_5)X^2\wedge X^3\wedge X^4$}  & {\scriptsize  $n(X_2)=n_1X_2$}    &	{\scriptsize  $ n_1\neq n_6$}\\

{\scriptsize  $ $}& {\scriptsize  $n(X_3)=n_2X_1+n_6X_3$}  &{\scriptsize  $$}  \\

{\scriptsize  $$} &{\scriptsize  $ n(X_4)=n_3X_1+n_5X_2+n_7X_3+n_1X_4$}&	{\scriptsize  $$}   \\ 
\hline

{\scriptsize  $r=X_1 \wedge X_2$} &{\scriptsize$n(X_1)=n_1X_1$}  &  	{\scriptsize $n_4\in \mathbb R-\{0\}$}\\

{\scriptsize  $ \phi=(-n_4n_7)X^2\wedge X^3\wedge X^4$}  & {\scriptsize  $n(X_2)=n_1X_2$}    &	{\scriptsize  $ n_7\in \mathbb R-\{0\}$}\\

{\scriptsize  $ $}& {\scriptsize  $n(X_3)=n_2X_1+n_4X_2+n_1X_3$}  &{\scriptsize  $$}  \\

{\scriptsize  $$} &{\scriptsize  $ n(X_4)=n_3X_1+n_5X_2+n_7X_3+n_1X_4$}&	{\scriptsize  $$}   \\ 
\hline


{\scriptsize  $r=X_1 \wedge X_3$} &{\scriptsize$n(X_1)=n_1X_1$}  &  	{\scriptsize $n_5\in \mathbb R-\{0\}$}\\

{\scriptsize  $ \phi=(-n_5n_6)X^2\wedge X^3\wedge X^4$}  & {\scriptsize  $n(X_2)=n_2X_1+n_1X_2+n_6X_3$}    &	{\scriptsize  $ n_6\in \mathbb R-\{0\}$}\\

{\scriptsize  $ $}& {\scriptsize  $n(X_3)=n_1X_3$}  &{\scriptsize  $$}  \\

{\scriptsize  $$} &{\scriptsize  $ n(X_4)=n_3X_1+n_5X_2+n_7X_3+n_1X_4$}&	{\scriptsize  $$}   \\ 
\hline

{\scriptsize  $r=X_1 \wedge X_3$} &{\scriptsize$n(X_1)=n_1X_1$}  &  	{\scriptsize $n_7\in \mathbb R-\{0\}$}\\

{\scriptsize  $ \phi=(n_4n_7-n_1n_7)X^2\wedge X^3\wedge X^4$}  & {\scriptsize  $n(X_2)=n_2X_1+n_4X_2$}    &	{\scriptsize  $n_1\neq n_4 $}\\

{\scriptsize  $ $}& {\scriptsize  $n(X_3)=n_1X_3$}  &{\scriptsize  $$}  \\

{\scriptsize  $$} &{\scriptsize  $ n(X_4)=n_3X_1+n_5X_2+n_7X_3+n_1X_4$}&	{\scriptsize  $$}   \\ 
\hline

\end{tabular}
\end{center}

\section{Some remarks on $r$-$qn$ structures}\label{app}
In this section we shall consider the conditions for which an $r$-$qn$ structure on Lie algebra $\mathfrak g$ defines a generalized complex structure on $\mathfrak g$ or an $R$-matrix on double of Lie algebra  $\mathfrak g \oplus\mathfrak g^*$. We bring some relevant examples of $r$-$qn$ structures of previous section. 

It is well-known that for a Lie bialgebra $(\mathfrak g, \mathfrak g^*)$, the vector space $\mathfrak g \oplus\mathfrak g^*$, so-called the {\em double of Lie bialgebra}, is equipped with a Lie algebra structure defined by
\begin{equation}\label{big}
[X+\alpha,Y+\beta]=[X,Y]+[\alpha, \beta]+ad^*_{X}\beta+ad^*_{\alpha}Y-ad^*_{Y}\alpha-ad^*_{\beta}X,\quad \forall X,Y\in \mathfrak g,\quad \alpha,\beta\in \mathfrak g^*.
\end{equation}

where $ad^*_{X}\alpha$ and $ad^*_{\alpha}X$ are the coadjoint representations of $\mathfrak g$ on $\mathfrak g^*$  and of $\mathfrak g^*$ on $\mathfrak g$ respectively. Let us use $\mathfrak g \bowtie \mathfrak g^*$ to denote the vector space $\mathfrak g \oplus\mathfrak g^*$ with the Lie algebra structure (\ref{big}). Recall that the nondegenerate symmetric bilinear form on the vector space $\mathfrak g \oplus\mathfrak g^*$ defined by
\[
\left\langle X+\alpha,Y+\beta\right\rangle =\alpha(X)+\beta(Y),\quad \forall X,Y\in \mathfrak g, \;\forall \alpha, \beta \in \mathfrak g^*,
\]
for more details we refer to \cite{Ko} and \cite{Lu}. 

By definitions, the solutions of modified Yang-Baxter equation on double Lie bialgebra $\mathfrak g \bowtie \mathfrak g^*$ with coefficient $k=-1$ are identified with the generalized complex structures on the Lie algebra $\mathfrak g$.

\begin{corollary}\label{result1}
The $r$-$qn$ structure $(r, \phi, n)$ on $\mathfrak g$ defines an $R$-matrix $\mathit J$ on the Lie algebra $\mathfrak g \bowtie \mathfrak g^*$ of the form 
\begin{equation}\label{J1}
\mathit{J}=\left(\begin{array}{cc}
n& r^{\sharp}\\ \theta_{\sharp} & -n^t\end{array} \right)
\end{equation}

\noindent if $\phi$ is a $3$-cobondary and following two conditions hold
\begin{equation}\label{con.g.c2}
\begin{array}{rcl}
n^t\theta_{\sharp}&=&\theta_{\sharp}N\\[3pt]

n^2+r^{\sharp}\theta_{\sharp}&=&kId,\\[3pt]
\end{array}\end{equation}
where $\theta$ is a $2$-cochain in $C^2(\mathfrak g)$ such that $\phi=\partial\theta$.
\end{corollary}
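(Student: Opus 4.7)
The plan is to reduce the statement to Corollary \ref{cr.1} by passing to the Lie group level and exploiting right-invariance. By Proposition \ref{inft-ver-right-1}, the datum $(r, \phi, n)$ gives a right-invariant $P$-$qN$ structure $(\Pi, \Phi, \mathbf{N}) = (\rvec{r}, \rvec{\phi}, \rvec{n})$ on $G$. On right-invariant forms the Chevalley--Eilenberg coboundary coincides with the de Rham differential (as recalled in Section \ref{Section2}), so the hypothesis $\phi = \partial\theta$ propagates to exactness $\Phi = d\Theta$ with $\Theta := \rvec{\theta}$. The two algebraic identities (\ref{con.g.c2}) hold at $\mathfrak e$ by assumption, and since each side is built from right-invariant tensors they hold pointwise on $G$. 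Specialising to $k=-1$, these are exactly the hypotheses of Corollary \ref{cr.1}, which produces a generalized complex structure $\mathit J$ on $G$ of the shape (\ref{J}). Being assembled from right-invariant pieces, $\mathit J$ is itself right-invariant and is therefore determined by its value at the identity, namely the operator (\ref{J1}) on $\mathfrak g\oplus\mathfrak g^*$. The identification stated just before the corollary---that generalized complex structures on $\mathfrak g$ correspond to solutions of the MYBE with $k=-1$ on $\mathfrak g\bowtie\mathfrak g^*$---then delivers the desired $R$-matrix.

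For the general-$k$ case I would proceed by a direct block computation. First check $\mathit J^2 = kId$: the diagonal blocks are $n^2 + r^\sharp\theta_\sharp = kId$ (by hypothesis) and its dual $(n^t)^2 + \theta_\sharp r^\sharp = kId$ (which follows by transposing, using skew-symmetry of $r$ and $\theta$), while the off-diagonal blocks vanish by $nr^\sharp = r^\sharp n^t$ (axiom (v) of Proposition \ref{inft-ver-right-1}) together with the new compatibility $n^t\theta_\sharp = \theta_\sharp n$. With $\mathit J^2 = kId$ in hand, the MYBE $\langle \mathit J, \mathit J\rangle_{-k} = 0$ reduces to a Nijenhuis-type integrability condition on $\mathfrak g\bowtie\mathfrak g^*$. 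I would verify this block by block on pairs $(X+\alpha,\, Y+\beta)$ using the double bracket (\ref{big}): the $\mathfrak g\times\mathfrak g$ component uses axiom (ii) together with $\phi = \partial\theta$ to match $[n,n](X,Y)$ with $r^\sharp\theta_\sharp(\dots)$; the mixed components use the cocycle conditions for $\phi$ and $i_n\phi$ (axiom (iii)) and the concomitant relation $C(r,n) = 0$ (axiom (iv)); the $\mathfrak g^*\times\mathfrak g^*$ component is the CYBE $\lcf r, r\rcf = 0$ (axiom (i)).

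The main obstacle is the bookkeeping in the mixed blocks: the coadjoint cross terms $\mathrm{ad}^*_X\beta$, $\mathrm{ad}^*_\alpha Y$, $\mathrm{ad}^*_\beta X$, $\mathrm{ad}^*_\alpha Y$ appearing in the bracket (\ref{big}) produce several contributions that must cancel against one another via the cocycle identities for $\theta$ and $i_n\theta$, and the interplay between $n^t$, $\theta_\sharp$ and the coadjoint action requires careful index tracking. Most of this labour is, however, already absorbed into the manifold-level proof of Corollary \ref{cr.1}: once one observes that the Courant bracket on right-invariant sections of $TG\oplus T^*G$ coincides with the bracket (\ref{big}) on $\mathfrak g\bowtie\mathfrak g^*$, the right-invariant reduction outlined in the first paragraph yields the statement with essentially no extra computation, which is why it is the preferred route.
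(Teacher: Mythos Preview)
Your approach is essentially the same as the paper's: both reduce the claim to Corollary~\ref{cr.1}. The paper's proof is in fact a single sentence (``It is proved directly from the Corollary~\ref{cr.1}''), so your argument---passing to the group via Proposition~\ref{inft-ver-right-1}, identifying $\partial$ with $d$ on right-invariant forms, and then invoking Corollary~\ref{cr.1}---is simply a more explicit unpacking of that same route.

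Where you go further than the paper is in flagging that Corollary~\ref{cr.1} is stated only for $k=-1$, whereas Corollary~\ref{result1} allows arbitrary $k$ (and Example~1 immediately afterward uses $k=n_1^2$). Your proposed block computation for general $k$ is a reasonable way to close this gap; the paper itself does not spell this out. One small caution in your final paragraph: the Courant bracket on right-invariant sections of $TG\oplus T^*G$ does \emph{not} literally coincide with the double bracket~(\ref{big}), since the latter contains the term $[\alpha,\beta]_*$ coming from the $r$-matrix and the coadjoint terms $ad^*_\alpha Y$, $ad^*_\beta X$. The identification the paper invokes (stated just before the corollary) is rather that generalized complex structures on $\mathfrak g$ correspond to $R$-matrices on $\mathfrak g\bowtie\mathfrak g^*$; this is an assertion one must take as input, not something that falls out of a bracket comparison.
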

\begin{proof}
It is proved directly from the Corollary \ref{cr.1}.  

\end{proof}
\subsection*{Example 1.} 
Consider the four-dimensional non-symplectic real Lie algebra $\mathfrak g:=A_{4,8}$ with non-zero commutators $[X_2,X_3]=X_1$, $[X_2,X_4]=X_2$ and $[X_3,X_4]=-X_3$. We choose the third $r$-$qn$ structure $(r,n)$ on this algebra of the table 3, as 
\[
r^{\sharp}(X^1)=X_2+X_3+X_4,\quad r^{\sharp}(X^2)=-X_1,\quad r^{\sharp}(X^3)=-X_1,\quad r^{\sharp}(X^4)=-X_1,
\]
\[
\phi=-(n_2n_5)X^2\wedge X^3\wedge X^4,
\]
\[
\begin{array}{rclcrclcrcl}
n(X_1)&=& (n_1+n_2)X_1,&&  n_(X_2)&=&-(n_3+n_4)X_1+(n_1+n_2-n_5)X_2,\\[3pt]
n(X_3)&=& n_3X_1+n_1X_3, && n(X_4)&=& n_4X_1+n_5X_2+n_2X_3+(n_1+n_2)X_4.\\
\end{array}
\]

The Lie algebra structure on the dual Lie algebra $\mathfrak g^*$ of $\mathfrak g$ induced by $r$-matrix $r$ has the non-zero commutators $[X^1,X^2]=X^2-X^4$ and $[X^1,X^3]=X^4-X^3$.

Integrating $3$-cocycle $\phi$ we find out it is a $3$-coboundary, that is there exist $2$-cochain $\theta$ such that $\partial \theta=\phi$, where
\[
\theta=(\frac{1}{2}n_2n_5)X^1\wedge X^4-(\frac{1}{2}n_2n_5)X^2\wedge X^3.
\]

Applying $n$, $r^{\sharp}$ and $\theta_{\sharp}$ on two conditions (\ref{con.g.c2}) and solving the equations, we get
\[
n_2=n_3=n_5=-4n_1,\quad n_4=0,\quad k=n_1^2.
\]
Thus the linear map $\mathit J:\mathfrak g \bowtie \mathfrak g^*\to \mathfrak g \bowtie \mathfrak g^*$ is a solution of MYBE with coefficient $k=n_1^2$, on the Lie algebra $\mathfrak g \bowtie \mathfrak g^*$ as

\begin{equation}
\mathit J=\left( \!\!\!\!‎‎
\begin{tabular}{c}‎
\begin{tabular}{cccc|cccc}
$-3n_1$& $4n_1$& $-4n_1$& 0& 0& $-1$& $-1$& 1\\‎
0& $n_1$& 0& $-4n_1$& 1& 0& 0& 0\\‎
0& 0&$ n_1$& $-4n_1$& 1& 0& 0& 0\\‎
0& 0& 0 &$ -3n_1$& 1& 0& 0& 0‎\\
\hline
0& 0& 0& $-8n_1^2$ & $-3n_1$& & 0& 0\\‎
0& 0& $8n_1^2$& 0&  $4n_1$& $n_1$& 0& 0\\‎
0& $-8n_1^2$& 0& 0& $-4n_1$& 0& $n_1$& 0\\‎
$8n_1^2$& 0& 0& 0& 0& $- 4n_1$& $- 4n_1$& $-3n_1$
\end{tabular}‎‎
\end{tabular}‎\!\!\!\! \right).
\end{equation}

\begin{remark}\label{g.c2}
Under the same assumption as in the Corollary \ref{result1}, if $k=-1$, then $\mathit J$ defines a generalized complex structure on $\mathfrak g$.

\end{remark}
\subsection*{Example 2.}
Consider the four-dimensional symplectic real Lie algebra $\mathfrak g:=A_{4,1}$ with non-zero commutators $[X_2,X_4]=X_1$ and $[X_3,X_4]=X_2$. On this algebra there is an $r$-$n$ structure (trivial $r$-$qn$ structure) $(r,n)$ (see table 1), as
\[
r^{\sharp}(X^1)=X_4,\quad r^{\sharp}(X^2)=-X_3,\quad r^{\sharp}(X^3)=X_2,\quad r^{\sharp}(X^4)=-X_1,
\]
Note that, the Lie algebra structure on the dual Lie algebra $\mathfrak g^*$ of $\mathfrak g$ induced by $r$-matrix $r$ has the non-zero commutators $[X^1,X^2]=X^3$ and $[X^1,X^3]=X^4$.

\noindent The Nijenhuis operator compatible with $r$ is characterized by
\[
\begin{array}{rclcrclcrcl}
n(X_1)&=& n_1X_1,&&  n_(X_2)&=&-n_2X_1+n_3X_2,\\[3pt]
n(X_3)&=& n_4X_1+n_3X_3, && n(X_4)&=& n_4X_2+n_2X_3+n_1X_4.\\
\end{array}
\]
With straightforward computation we see there is no $3$-coboundary on this Lie algebra. One can check all $2$-cochains $\theta\in C^2(\mathfrak g)$ which satisfy cocycle condition are the form  
\[
\theta=cX^1\wedge X^4+dX^2\wedge X^3+eX^2\wedge X^4+fX^3\wedge X^4, \quad a,b,d,e,f\in \mathbb R,
\]
that is $\phi:=\partial \theta=0$. So we have
\[
\begin{array}{rclcrclcrcl}
\theta_{\sharp}(X_1)&=& cX^4,&& \theta_{\sharp}(X_2)&=& dX^3+eX^4,\\[3pt] \theta_{\sharp}(X_3)&=&-dX^2+fX^4, &&\theta_{\sharp}(X_4)&=& -cX^1-eX^2-fX^3.\\
\end{array}
\] 
Applying $n$, $r^{\sharp}$ and $\theta_{\sharp}$ on two conditions (\ref{con.g.c2}), for $k:=-1$ we get
\[
c=n_1^2+1,\quad d=-1-n_1^2,\quad e=-2n_1n_2,\quad f=2n_1n_4  ,\quad n_1=n_3.
\]
Therefore, we have a generalized complex structure $\mathit J:\mathfrak g \oplus\mathfrak g^*\to \mathfrak g \oplus\mathfrak g^*$ on the Lie algebra $\mathfrak g:=A_{4,1}$ as \\
\begin{equation}
\mathit J=\left( \!\!\!\!‎‎
\begin{tabular}{c}‎
\begin{tabular}{cccc|cccc}
$n_1$& $-n_2$& $n_4$& 0& 0& 0& 0& $-1$\\‎
0& $n_1$& 0& $n_4$& 0& 0& 1&0\\‎
0& 0& $n_1$&$ n_2$& 0& $-1$& 0& 0\\‎
0& 0& 0 & $n_1$& 1& 0& 0& 0‎\\
\hline
0& 0& 0& $-(n_1^2+1)$& $-n_1$& 0& 0& 0\\‎
0& 0& $n_1^2+1$& $2n_1n_2$& $n_2$& $-n_1$& 0& 0\\‎
0& $-(n_1^2+1)$& 0& $-2n_1n_4$& $-n_4$& 0& $-n_1$& 0\\‎
$n_1^2+1$& $-2n_1n_2$& $2n_1n_4$& 0& 0& $-n_4$& $-n_2$& $-n_1$
\end{tabular}‎‎
\end{tabular}‎\!\!\!\! \right).
\end{equation}

\begin{remark}\label{g.c3}
The trivial $r$-$qn$ structures ($r$-$n$ structures) on $\mathfrak g$ for which $\theta\equiv 0$, the structure $\mathit J$ defined in (\ref{J1}) is a generalized complex structure on $\mathfrak g$ if $n^2=-Id$. 
\end{remark}

\subsection*{Example 3.} Consider the four-dimensional symplectic real Lie algebra $\mathfrak g:=II\oplus\mathbb R$ with non-zero commutator $[X_2,X_3]=X_1$. We have an $r$-$n$ structure $(r,n)$ (trivial $r$-$qn$ structure) on this algebra with he following $r$-matrix $r$ and Nijenhuis operator $n$ (listed in the table 2 of \cite{Zohreh}).

\noindent The non-degenerate $r$-matrix $r$ is as $r=X_1\wedge X_3-X_2\wedge X_4$, so
\[
r^{\sharp}(X^1)=X_3,\quad r^{\sharp}(X^2)=-X_4,\quad r^{\sharp}(X^3)=-X_1,\quad r^{\sharp}(X^4)=X_2,
\]
Note that, the Lie algebra structure on the dual Lie algebra $\mathfrak g^*$ of $\mathfrak g$ induced by $r$-matrix $r$ has the non-zero commutator $[X^1,X^4]=X^3$.

\noindent The Nijenhuis operator compatible with $r$ is characterized by
\[
\begin{array}{rclcrclcrcl}
n(X_1)&=& n_1X_1+n_5X_4,&&  n_(X_2)&=&-n_2X_1+n_4X_2+n_5X_3,\\[3pt]
n(X_3)&=& n_3X_2+n_1X_3+n_2X_4, && n(X_4)&=& n_3X_1+n_4X_4.\\
\end{array}
\]
One can check in general, there is no $3$-coboundary on this Lie algebra. Since $\phi\equiv 0$, we see that following $2$-cochain $\theta \in C^2(\mathfrak g)$ 
\[
\theta=aX^1\wedge X^2+bX^1\wedge X^3+dX^2\wedge X^3+eX^2\wedge X^4+fX^3\wedge X^4, \quad a,b,d,e,f\in \mathbb R,
\]
are the possibilities for $\phi:=\partial \theta=0$. So we have
\[
\begin{array}{rclcrclcrcl}
\theta_{\sharp}(X_1)&=& aX^2+bX^3,&& \theta_{\sharp}(X_2)&=&-aX^1+dX^3+eX^4,\\[3pt] \theta_{\sharp}(X_3)&=&-bX^1-dX^2+fX^4, &&\theta_{\sharp}(X_4)&=& -eX^2-fX^3.\\
\end{array}
\] 
Applying $n$, $r^{\sharp}$ and $\theta_{\sharp}$ on two conditions (\ref{con.g.c2}) for $k=-1$, we get
\[
a=b=d=e=f=0,\quad n_1=n_4=0, \quad n_3=1,\quad n_5=-1.
\]
 Therefore, we have a generalized complex structure $\mathit J:\mathfrak g \oplus\mathfrak g^*\to \mathfrak g \oplus\mathfrak g^*$ on the Lie algebra $\mathfrak g:=II\oplus \mathbb R$ as 
\begin{equation}
\mathit J=\left( \!\!\!\!‎‎
\begin{tabular}{c}‎
\begin{tabular}{cccc|cccc}
0&$-n_2$& 0& 1& 0& 0&$ -1$& 0\\‎
0& 0& 1& 0& 0& 0& 0& 1\\‎
0&$-1$& 0& 0& 1& 0& 0& 0\\‎
$-1$& 0& $n_2$ & 0& 0& $-1$& 0& 0‎\\
\hline
0& 0& 0& 0& 0& 0& 0 &1\\‎
0& 0& 0& 0& $n_2$& 0& 1& 0\\‎
0& 0& 0& 0& 0& $-1$& 0& $-n_2$\\‎
0& 0& 0& 0&$ -1$& 0& 0& 0
\end{tabular}‎‎
\end{tabular}‎\!\!\!\! \right)
\end{equation}
Note that, the statement of Remark \ref{g.c3} holds since $n^2=-Id$ and $\theta\equiv 0$.



\subsection*{Acknowledgments}
The second author would like to thank Institute of Mathematics Polish Academy of Sciences for the hospitality in a visit where a part of this project was being done.

 \end{document}